\tikzset{ state/.style={draw,ellipse,initial text=} }
\tikzset{
  loop above right/.style={above right, out= 60, in= 30, loop},
  loop above left/.style ={above left,  out=150, in=120, loop},
  loop below right/.style={below right, out=330, in=300, loop},
  loop below left/.style ={below left,  out=240, in=210, loop}
}
\newcommand{\ie}    	{i.e., }
\newcommand{\eg}    	{e.g., }
\newcommand{\wrt}   	{w.r.t.\ }
\newcommand{\true}   	{{\mathtt{true}}}
\newcommand{\false}   	{{\mathtt{false}}}
\newcommand{\obs}   	{{\mathit{obs}}}
\newcommand{\prob}   	{{\sf {Prob}}}
\newcommand{\last}   	{\mathit{last}}
\newcommand{\dist}   	{{\sf{Dist}}}
\newcommand{\opacctl}	{{\sf OpacTL}\,}
\newcommand{\opacpctl}	{{\sf OpacPTL}\,}
\def\F{\mathbf{F}}
\def\P{\mathbf{P}}
\def\R{\mathbf{R}}
\def\D{\mathcal{D}}
\def\T{\mathcal{T}}
\def\PT{\mathcal{PT}}
\def\U{\mathbf{U}}
\def\X{\mathbf{X}}
\def \H{\sf {H}}
\def \L{\sf {L}}
\def \opac{\odot}
\newcommand{\TR}[1]{\mathsf{tr}(#1)}
\newcommand{\OPAC}[1]{\opac \lbrack #1 \rbrack}
\newcommand{\POWERSET}[1]{\frak{P}(#1)}
\newcommand{\RUN}[1]{\mathsf{path}(#1)}
\newcommand{\ERASE}[1]{\mathsf{erase}(#1)}
\newcommand{\ASP}{\mathsf{Asp}}
\newcommand{\SAT}{\mathsf{Sat}}
\newcommand{\POST}{\mathsf{Post}}
\newcommand{\PRE}{\mathsf{Pre}}
\newcommand{\FS}{\rightarrow}
\newcommand{\CAL}[1]{\mathcal{#1}}
\newcommand{\LLL}{\CAL{L}}
\newcommand{\MMM}{\CAL{M}}
\newcommand{\HHH}{\CAL{H}}
\newcommand{\PPP}{\mathbb{P}}
\newcommand{\NAT}{\mathbb{N}}
\newcommand{\TRANS}[1]{\stackrel{#1}{\longrightarrow}}
\newenvironment{GRAMMAR}{\[\begin{array}{lcl}}{\end{array}\]}
\newcommand{\VERTICAL}{\  \mid\hspace{-3.0pt}\mid \ }
\definecolor{blue}{rgb}{0,0,1}
\newcommand{\keywords}[1]{\par\addvspace\baselineskip
\noindent\keywordname\enspace\ignorespaces#1}
\begin{document}
\mainmatter  

\title{Quantitative Verification of Opacity Properties in Security Systems}
\titlerunning{Quantitative Verification of Opacity Properties in Security Systems}

 \author{Chunyan Mu\inst{1} \and David Clark\inst{2}} 
 \authorrunning{C. Mu \and D. Clark} 
 \institute{$^1$ Department of Computing and Games, Teesside University, U.K.\\
 	$^2$ Department of Computer Science, University College London, U.K.
 	\mailsa
 }

\maketitle

\begin{abstract}
 \noindent
 We delineate a methodology for the specification and verification of flow security properties expressible in the opacity framework. We propose a logic, \opacctl, for straightforwardly expressing such properties in systems that can be modelled as partially observable labelled transition systems.  We develop verification techniques for analysing property opacity with respect to observation notions. Adding a probabilistic operator to the specification language enables quantitative analysis and verification. This analysis is implemented as an extension to the PRISM model checker and illustrated via a number of examples. 
Finally, an alternative approach to quantifying the opacity property based on entropy is sketched.
 
 \keywords
 {opacity, logic, security, verification}
\end{abstract}

\section{Introduction}
\label{sec:intro}

Security of software can be stated and investigated only with respect
to a specification of what security means in a given context. The
proliferation of software has lead to a profusion of security
properties, many of which exhibit a great deal of similarity. This has
lead to a search for common abstractions, and general formalisms for
expressing security properties.  \emph{Opacity}, first introduced by Mazar\'{e} 
\cite{MazareL:usiunifop}, is a promising approach for describing and
unifying security properties.  The key insight behind opacity is
twofold:

\begin{itemize}

\item Distinguish between system behaviour, and observations of the
  system's behaviour.

\item A security property $\varphi$ is opaque, provided that for every
  behaviour $\pi$ satisfying $\varphi$ there is another behaviour
  $\pi'$, not satisfying $\varphi$, such that $\pi$ and $\pi'$ give
  rise to identical observations.

\end{itemize}
Clearly, when $\varphi$ is opaque in this sense, observers cannot be
sure if $\varphi$ holds of the observed system, or not.  
Many familiar security definitions such as non-interference, declassification, and 
knowledge-based security can be obtained by suitable
choices of predicate $\varphi$~\cite{SchoepeS15}, 
and notion of observable behaviour.

Absolute guarantees of security are often impractical, and we may
tolerate violation of security properties, as long as it happens
with sufficiently low probability.  So verifying and analysing
security properties can also usefully take quantitative aspects into
account.  For this reason, opacity has been extended to
quantitative opacity~\cite{BerardMS10,BryansKM12}.  

While opacity successfully unifies many concrete notions of software
security, it is a semantic concept, typically described 
informally using the mathematical vernacular. There are not yet dedicated development of tools for automatic verification of opacity
properties. The present paper asks several questions:

\begin{itemize}

\item Can we build a sufficiently expressive \emph{logic} for opaque
  predicates?

\item Can we build automated verification tools for opaque predicates?

\item Can we extend the logic and automated verification for opacity to
  probabilistic opacity?

\end{itemize}
We answer all questions in the affirmative.

First, we propose a logic $\opacctl$ for expressing opacity
and internalise its definition in the logic: 
for this purpose we extend conventional temporal logic with a
predicate
\[
   \OPAC{\psi}
\]
which expresses that property $\psi$ is opaque.  
$\opacctl$ allows us to specify the more general opacity property in a \emph{straightforward} way, 
and the property required to be secret can be defined flexibly regarding users' requirements. 
We also present $\opacpctl$, a probabilistic generalisation of $\opacctl$, 
which enables us to express probabilistic opacity,
and allows us to reason about the \emph{degree} of satisfaction or violation of the security property of interest.
We demonstrate the expressibility of both $\opacctl$ and $\opacpctl$
through examples of opacity, respectively probabilistic opacity, from the literature. 
We automate both logics by translation to the PRISM model checker.
 We also introduce an
 alternative approach to measure the level of opacity of the system, based on the notion of entropy.

 \paragraph{Outline.} This paper is organised as follows.
 In Section~\ref{sec:prelim} we recall the definition of labelled
 transition system, observations, and opacity.
 Section~\ref{sec:opacctl} introduces a temporal logic \opacctl\, for
 the formal specification of opacity.  
 Section~\ref{sec:opacpctl} presents the probabilistic
 extension of the logic \opacpctl, and a prototype model checker for
 the (probabilistic) opacity fragment of the logic.
 Section~\ref{sec:eg} describes the implementation of our techniques
 for analysing (quantitative) opacity flow properties of systems in a
 prototype tool, and demonstrates its applicability using several case
 studies.  
Section~\ref{sec:entropy} provides an alternative measurement of the opacity formula based on the notion of \emph{entropy}.
 Section~\ref{sec:related-work} briefly reviews  literature
 in related areas,
 and Section~\ref{sec:conc} draws conclusions
and points out some future directions.

\section{Labelled transition systems and observations}
\label{sec:prelim}

Let $\NAT$ be the set of natural numbers, assume $0 \in \NAT$.
An \emph{alphabet} $\Sigma$ is a non-empty, finite set,
$|\Sigma|$ is its cardinality. 
$\Sigma^*$ denotes the set of all \emph{finite} words over $\Sigma$ 
including the \emph{empty word $\varepsilon$},
$\Sigma^n$ denotes the set of all \emph{finite} words over $\Sigma$ and the length of the words is $n$,
 $\Sigma^+ = \Sigma^* \setminus \{\varepsilon\}$,
$\Sigma^{\omega}$ denotes the set of all \emph{infinite} words, $\Sigma^{\infty}$ denotes the set of all \emph{finite} and \emph{infinite} words.
The subsets of $\Sigma^*$:$ L \subseteq \Sigma^*$ are called languages,
and $L \subseteq \Sigma^{\infty}$ are called $\omega$-languages.

\begin{definition}\label{def_lts}
A \emph{labelled transition system} (LTS) is a tuple $\LLL = (S, \Sigma, \FS, F)$, 
where: 
\begin{itemize}

    \item $\Sigma$ is a finite alphabet of \emph{labels}, or
      \emph{actions}, including a distinguished element $\bot$
      representing termination (staying there forever).

    \item $S$ is a finite set of \emph{states}.

    \item $\FS \subseteq S \times \Sigma \times S$ is the
      \emph{transition relation}.

	\item $F \subset S$ is the set of final states, a (possibly empty) subset of $S$.
\end{itemize}
\end{definition}

\begin{definition}
We say $\LLL$ is \emph{deterministic} iff $\forall s,t \in S$, whenever  $s \TRANS{a} t$ and
$s \TRANS{a} t'$ then $t = t'$.  We say $\LLL$ is \emph{circular},
if each state has an outgoing transition, \ie for all $s \in S$ there
is $s \TRANS{a} t$.  A \emph{path in $\LLL$} is a total map $\pi :
\NAT \FS (\Sigma \cup S)$, subject to the following constraints:
\begin{itemize}

\item Whenever $i \in \NAT$ is even, then $\pi(i)$ is a
  state. Otherwise $\pi(i)$ is a label.

\item For all even $i$ we have $\pi(i) \TRANS{\pi(i+1)} \pi(i+2)$.

\end{itemize}
The set of all $\LLL$'s paths is denoted by $\RUN{\LLL}$.  The set of paths of
$\LLL$ starting from state $s$ is denoted by $\RUN{\LLL,s}$.  We write $\pi \lbrack i \dots \rbrack$ to denote $\pi(i) \pi(i+1) \dots$, 
and write $\ERASE{\pi}$ for the map that erases all the states from $\pi$: $\ERASE{\pi} = \pi(1)\pi(3)\dots$, in other words, 
$n \mapsto \pi(2n+1)$, defined for all $n \in \NAT$.  
A \emph{trace in $\LLL$} is a map $tr : \NAT \FS \Sigma$, 
such that $tr = \ERASE{\pi}$ for some path $\pi$.
\end{definition}
We are often sloppy when talking about paths and traces, in particular,
we often elide the indices, writing e.g.~$ab\bot \bot...$ for a trace
$\{(0, a), (1, b)\} \cup \{(n, \bot)\ |\ n > 1\}$.

\begin{definition}
A trace $tr$ is \emph{well-structured} iff for all suitable $i < j$ we
have: $tr(i) = \bot \Rightarrow tr(j) =\bot$. Such the smallest $i$ is denoted by
$\last$.  A path $\pi$ is \emph{well-structured} iff $\ERASE{\pi}$ is
well-structured, $\pi(\last*2)$ is called a \emph{final state}. 
$\LLL$ is \emph{well-structured} iff all of its paths
are well-structured.  Traces and paths are \emph{semantically finite}
iff they are well-structured, and at least one of their labels is
$\bot$.
\end{definition}
Note that a well-structured LTS won't change a state after a termination. A key idea in modelling security properties is the observation power of the attacker. 
We use a set of \emph{observables}, distinct from the states and actions of the
 LTS, for this purpose.  Actions, states and observables are connected
by an observation function.

\begin{definition}
Let $\Theta$ be a finite alphabet for observables. We write
$\Theta_{\bot}$ for $\Theta \cup \{\bot\}$, assuming that $\bot \notin
\Theta$.  A \emph{observation function} is a function 
$\obs: \RUN{\LLL} \to \Theta^{*}_{\bot}$, 
subject to the additional constraint that $\obs(\bot)=\bot$: 
\ie for all paths $\pi$ of the form $\pi = \pi_L\bot\pi_R$ 
we have $obs(\pi) = obs(\pi_L)\bot obs(\pi_R)$.
Observation functions on traces are defined similarly.
\end{definition}
The intuition is that $\Theta$ contains all possible projections of paths. 
In practice many observation functions will have additional structure, 
\eg $\obs(\pi) = \obs'(\pi(1)) \obs'(\pi(3)) \obs'(\pi(5))\dots$, 
for some function $\obs': \Sigma \to \Theta_{\bot}$ on transition labels. 

Opacity is a general framework for formulating and unifying security
properties expressed as predicates. Here, a predicate is simply a subset of $\RUN{\LLL}$.
A predicate is \emph{opaque} if, given any path of the system, an adversary's observation 
is unable to determine whether the path satisfies the predicate.
In comparison with other security policies such as non-interference, 
the opacity framework allows a more flexible specification 
of both the adversary's power of observation and the confidential properties of the system.
%
%
We use $\varphi$ as a shorthand for a set of traces satisfying $\varphi$ in the paper somewhere.
\begin{definition}{[Opacity and observability]}\label{defn_opacity}
  A \emph{predicate} $\varphi$ over $ \RUN{\LLL}$ is a subset
  of $ \RUN{\LLL}$.  Given an observation function $\obs$, a predicate $\varphi$ is
  \emph{opaque \wrt $\obs$} iff: for every path $\pi \in \varphi$, there
  is a path $\pi' \not\in \varphi$ such that $\obs(\pi) = \obs(\pi')$,
  \ie all paths in $\varphi$ are covered (observationally equivalent) by paths in $\bar\varphi$ 
  (the complement of $\varphi$):
  $\obs(\varphi) \subseteq \obs(\bar \varphi)$.
  Paths in $\varphi$ are called \emph{observable} paths 
    iff there is at least one path in $\varphi$ which is not covered by paths in $\bar\varphi$: $\varphi \setminus \obs^{-1}(\obs(\bar \varphi)) \neq \emptyset$.
\end{definition}

Many special cases of opacity are easily definable in our approach,
such as \emph{initial-state opacity}, \emph{final-state opacity},
\emph{initial-final-state opacity}~\cite{WuL13},
\emph{total-opacity}~\cite{BryansKMR08} and \emph{language-based
  opacity}~\cite{SabooriH13,SabooriH14}. In Section~\ref{sec:opacpctl}
we will generalise opacity to \emph{probabilistic opacity}~\cite{BryansKM12}.





\section{The Opaque Temporal Logic \opacctl}
\label{sec:opacctl}
The behaviour of a state transition system is described as sequences
of labels during the possible executions.
The labels indicate the valuations of the input/output variables of the system.
The temporal properties we specify are upon such behaviours over the same alphabet.
We study here \opacctl, which is a temporal logic~\cite{ClarkeE81} 
with an opacity operator $\opac$ over \emph{semantically finite} traces for specification of security properties. 

\subsection{Formulae}
Let $\ASP$ be a fixed set, the \emph{atomic state propositions},
ranged over by $\alpha$.  We have two classes of formulae given
by the grammar below: \emph{state formulae} and
\emph{path formulae} ranged over by $\phi$ and $\psi$ respectively.

\begin{GRAMMAR}
  \phi
     &\qquad::=\qquad&
  \true
     \VERTICAL
  \false
     \VERTICAL
  \alpha
     \VERTICAL
  \neg \phi
     \VERTICAL
  \phi \land \phi
     \VERTICAL
  \OPAC{\psi}
     \\
  \psi
     &\qquad::=\qquad&
  \X \phi 
     \VERTICAL
  \phi  \U \phi 
     \VERTICAL
     \phi  \R \phi
     \VERTICAL
  \bot
    \VERTICAL
  \neg \psi
\end{GRAMMAR}
Note that an \opacctl formula is defined relative to a state and always a state formula.
Path formulae only appear inside the $\OPAC{\cdot}$ operator.

\subsection{Model}\label{def_model}

A \emph{model} is a tuple $(\LLL, \eta, \obs)$ where:
\begin{itemize}
  
\item $\LLL = (\Sigma, S, \FS, F)$ is an LTS, that is circular,
  deterministic and well-structured.
  
\item  $\eta : S \FS \POWERSET{\ASP}$ is the state labelling function,
where $\POWERSET{\ASP}$ is the powerset of $\ASP$.

\item $\obs : \RUN{\LLL} \to \Theta^{*}_{\bot}$ is an observation
  function.

\end{itemize}

\subsection{Satisfaction relations}
As we have two notions of formulae (state and path
formulae), we need two satisfaction relations.  Let $\MMM = (\LLL,
\eta, obs)$ and $s \in S$.  We now define the \emph{satisfaction relation}
$\MMM \models_s \phi$ for state formulae.

\begin{itemize}

  \item $\MMM \models_s \true$ always holds.

  \item $\MMM \models_s \alpha$ iff $\alpha \in \eta(s)$.

  \item $\MMM \models_s \neg\phi$ iff $\MMM \not \models_s \phi$.

  \item $\MMM \models_s \phi \land \phi'$ iff $\MMM \models_s \phi$ and  $\MMM \models_s  \phi'$.

 \item $\MMM \models_s \OPAC{\psi}$ iff for all paths $\pi \in  \RUN{\LLL, s}$
     whenever $\MMM \models_{\pi} \psi$
     then there
     exists a path  $\pi' \in \RUN{\LLL, s}$ s.t.
     $\MMM \not\models_{\pi'} \psi$ and
     $\obs(\pi) = \obs(\pi')$.
\end{itemize}

\noindent Now let $\pi$ be a path in $\LLL$. Then we define:

\begin{itemize} 

\item $\MMM \models_{\pi} \X \phi$ iff $\MMM \models_{\pi(2)} \phi$.
  
\item $\MMM \models_{\pi} \phi \U \phi'$ iff $\exists i \in \NAT.\MMM \models_{\pi(2i)} \phi'$ and 
  $\forall 0 \le j < i.\MMM \models_{\pi(2j)} \phi$.

\item $\MMM \models_{\pi} \phi \R \phi'$ iff 
     either $\exists i \in \NAT.\MMM \models_{\pi(2i)} \phi \land \forall 0 \le j \le i.\MMM \models_{\pi(2j)} \phi'$ 
      or $\forall j \in \NAT.\MMM \models_{\pi(2j)} \phi'$.

\item $\MMM \models_{\pi} \bot$ iff $\pi(1) = \bot$.
  
\item $\MMM \models_{\pi} \neg\psi$ iff $\MMM \not \models_{\pi} \psi$.

  
\end{itemize}

%
%
%
%
%
%

\noindent Note that the opacity operator is general and can be used
to express initial-opacity (only $\pi(0)$ is sensitive), final-opacity
(only $\pi(\last*2)$ is sensitive (focus of this paper), and language-opacity ($\pi(i)$ is sensitive for all $i$). 


\begin{theorem}
Given a model $\MMM = (\LLL, \eta, \obs)$ and a state $s$ in
$\LLL$. Let $\psi$ be a path formula.
   Define: $
      \TR{\psi,s}
          =
      \{\ERASE{\pi}: \pi \in \RUN{\LLL, s}\ |\ \MMM \models_{\pi} \psi \}
   $,
   then we have:
   \[
   \MMM \models_s \OPAC{\psi}
      \qquad\text{iff}\qquad
   \TR{\psi,s}\ \text{is semantically opaque \wrt} \obs. 
   \]
   Here, ``semantically opaque'' is to be understood as the definition of opaque in
   Definition \ref{defn_opacity}.
\end{theorem}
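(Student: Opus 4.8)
The plan is to reduce both sides of the claimed equivalence to the same statement by exploiting determinism of the model's LTS. Concretely, I will show that $\ERASE{\cdot}$ restricts to a bijection from $\RUN{\LLL,s}$ onto the set $\TRACE{\LLL,s}=\{\ERASE{\pi}:\pi\in\RUN{\LLL,s}\}$ of traces issuing from $s$, and that transporting the clause defining $\MMM\models_s\OPAC{\psi}$ across this bijection yields, verbatim, the assertion that $\TR{\psi,s}$ is opaque in the sense of Definition~\ref{defn_opacity}. Two points of interpretation, left implicit in the statement, must be fixed first: the complement $\overline{\TR{\psi,s}}$ occurring in Definition~\ref{defn_opacity} is to be read relative to the ambient set $\TRACE{\LLL,s}$ of traces realisable from $s$ (rather than relative to $\Sigma^{\infty}$ or to all of $\RUN{\LLL}$); and $\obs$, a priori defined on paths, is carried to traces by the standing assumption that $\obs(\pi)$ depends only on $\ERASE{\pi}$ — this holds in particular for the per-label observation functions $\obs(\pi)=\obs'(\pi(1))\obs'(\pi(3))\cdots$ singled out after the definition of observation function — so there is a well-defined trace observation function, still written $\obs$, with $\obs(\ERASE{\pi})=\obs(\pi)$.

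First I would establish the bijection. For injectivity of $\ERASE{\cdot}$ on $\RUN{\LLL,s}$: if $\pi_1,\pi_2\in\RUN{\LLL,s}$ satisfy $\ERASE{\pi_1}=\ERASE{\pi_2}$, then $\pi_1(0)=\pi_2(0)=s$ and $\pi_1(2i{+}1)=\pi_2(2i{+}1)$ for all $i$; an induction on $i$ using $\pi_j(2i)\TRANS{\pi_j(2i+1)}\pi_j(2i{+}2)$ together with determinism gives $\pi_1(2i)=\pi_2(2i)$ for all $i$, hence $\pi_1=\pi_2$. Surjectivity onto $\TRACE{\LLL,s}$ is immediate from the definition of trace, so every $t\in\TRACE{\LLL,s}$ has a unique preimage $\pi_t\in\RUN{\LLL,s}$. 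Next I would record the translation lemma: for $\pi\in\RUN{\LLL,s}$, $\MMM\models_{\pi}\psi$ iff $\ERASE{\pi}\in\TR{\psi,s}$. Left-to-right is by definition of $\TR{\psi,s}$; right-to-left uses injectivity — if $\ERASE{\pi}=\ERASE{\pi'}$ with $\MMM\models_{\pi'}\psi$ then $\pi'=\pi$. Equivalently, since $\ERASE{\pi}\in\TRACE{\LLL,s}$, we get $\MMM\not\models_{\pi}\psi$ iff $\ERASE{\pi}\in\TRACE{\LLL,s}\setminus\TR{\psi,s}=\overline{\TR{\psi,s}}$.

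The main equivalence is then a direct rewriting of quantifiers. For the forward direction, assume $\MMM\models_s\OPAC{\psi}$ and take $t\in\TR{\psi,s}$, say $t=\ERASE{\pi_t}$; the translation lemma gives $\MMM\models_{\pi_t}\psi$, so the $\OPAC{\psi}$ clause provides $\pi'\in\RUN{\LLL,s}$ with $\MMM\not\models_{\pi'}\psi$ and $\obs(\pi')=\obs(\pi_t)$, whence $t'=\ERASE{\pi'}\in\overline{\TR{\psi,s}}$ and $\obs(t')=\obs(\pi')=\obs(\pi_t)=\obs(t)$; thus $\obs(\TR{\psi,s})\subseteq\obs(\overline{\TR{\psi,s}})$. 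The converse is symmetric: given $\pi\in\RUN{\LLL,s}$ with $\MMM\models_{\pi}\psi$, set $t=\ERASE{\pi}\in\TR{\psi,s}$, obtain $t'\in\overline{\TR{\psi,s}}$ with $\obs(t')=\obs(t)$ from opacity, write $t'=\ERASE{\pi'}$, and conclude $\MMM\not\models_{\pi'}\psi$ with $\obs(\pi')=\obs(\pi)$, exhibiting the witness required by $\MMM\models_s\OPAC{\psi}$.

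The main obstacle is conceptual rather than computational: it is making the semantic and logical notions genuinely comparable, i.e.\ getting the ambient universe of the complement right and ensuring $\obs$ is coherent between the path and trace levels. Both are harmless under the model's standing assumptions, and the translation lemma is precisely where determinism earns its keep — without it a single trace could be realised both by a path satisfying $\psi$ and by a path refuting it, so $\ERASE{\cdot}$ would fail to be injective, $\TR{\psi,s}$ and $\overline{\TR{\psi,s}}$ would no longer partition $\TRACE{\LLL,s}$, and the equivalence as stated would break down.
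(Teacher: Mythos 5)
Your proof is correct and follows essentially the same route as the paper's: both unwind the satisfaction clause for $\OPAC{\psi}$ and match it, quantifier for quantifier, against the covering condition of Definition~\ref{defn_opacity}. You are in fact more careful than the paper's own chain of equivalences, which silently identifies the set of paths satisfying $\psi$ with the trace set $\TR{\psi,s}$; your explicit bijection $\ERASE{\cdot}\colon\RUN{\LLL,s}\to\TRACE{\LLL,s}$ under determinism, together with the transport of $\obs$ to traces and the fixing of the ambient set for the complement, is exactly the bookkeeping needed to make that identification legitimate.
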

\begin{proof}
According to the satisfaction relations, we have:
\begin{eqnarray*}
\MMM \models_{s} \OPAC{\psi} 
& \Leftrightarrow &
 \forall \pi \in \RUN{\LLL,s}. (\MMM \models_{\pi} \psi 
 ~ \Rightarrow ~ \\
&& \exists \pi' \in \RUN{\LLL,s} 
~ s.t.~ (\MMM \not\models_{\pi'} \psi 
		\land \obs(\pi) = \obs(\pi')  ) )\\
& \Leftrightarrow &
\{\obs(\pi)  \mid \MMM \models_{\pi} \psi\}  
\subseteq 
\{\obs(\pi') \mid \MMM \models_{\pi'} \neg\psi\} \\
& \Leftrightarrow &
\{ \pi \in \RUN{\LLL, s} \mid \MMM \models_{\pi} \psi \}\ \text{are covered by paths violating $\psi$.} \\
& \Leftrightarrow &
\TR{\psi,s}\ \text{is semantically opaque.} 
\qquad \qquad \qquad \qquad \qquad \qquad \ \ \ \hfill\Box
\end{eqnarray*}
\end{proof}
The opacity computation problem is the problem of recognising whether there is a path violating $\psi$ but observationally equivalent to each of the $\psi$ satisfying paths.
%

\subsection{Verification of $\opacctl$}
\label{subsec:veri-opacctl}
The verification problem of $\opacctl$ is a decision algorithm to check whether $\MMM \models_s \phi$, for a given model $\MMM$, and an $\opacctl$ formula $\phi$ and a starting state $s$. That is, we need to  set up whether the formula $\phi$ is valid in the initial state $s$ of $\MMM$. Let $\POST(s)$ denote immediate state successors of $s$ in a path, and $\PRE(s)$ denote the immediate state predecessors of $s$ in a path. The basic procedure follows the conventional CTL model checking~\cite{Baier2008}: 

\begin{itemize}
   \item [(i)] Convert the $\opacctl$ formulae in a positive normal form, that is, formulae built by the basic modalities $\OPAC{\X \phi}$, $\OPAC{\phi  \U \phi'}$, and $\OPAC{\phi \R \phi'}$, and successively pushing negations inside the formula at hand: $\neg \true \leadsto \false$, $\neg \false \leadsto \true$, $\neg\neg \phi \leadsto \phi$, $\neg(\phi \land \phi') \leadsto \neg \phi \lor \neg \phi'$, $\neg(\phi \lor \phi') \leadsto \neg \phi \land \neg \phi'$, $\neg \X \phi \leadsto \X \neg \phi$, $\neg(\phi \U \phi') \leadsto \neg\phi \R \neg\phi'$, $\neg(\phi \R \phi') \leadsto \neg\phi \U \neg\phi'$;
   \item [(ii)] Recursively compute the satisfaction sets $\SAT(\phi') = \{s \in S \mid s \models \phi'\}$ for all state subformulae $\phi'$ of $\phi$: the computation carries out a bottom-up traversal of the parse tree of the state formula $\phi$ starting from the leafs of the parse tree and completing at the root of the tree which corresponds to $\phi$, where the nodes of the parse tree represent the subformulae of $\phi$ and the leafs represent an atomic proposition $\alpha \in \ASP$ or $\true$ or $\false$. 
All inner nodes are labelled with an operator. For positive normal form formulae, the labels of the inner nodes are $\neg$, $\land$, $\OPAC{\X}$, $\OPAC{\U}$, $\OPAC{\R}$.
At each inner node, the results of the computations of its children are used and combined to build the states of its associated subformula. 
In particular, satisfaction sets for conventional state formula are given as follows: 
   \begin{itemize}
	 \item $\SAT(\true) = S$,
	 
     \item $\SAT(\alpha) = \{t \in S \mid \alpha \in \eta(t)\}$,

     \item $\SAT(\neg \phi) = S \setminus \SAT(\phi)$,

     \item $\SAT(\phi \land \phi') = \SAT(\phi) \cap \SAT(\phi')$,
     
     
     \item $\SAT(\OPAC{\psi}) = \{s \in S \mid \T_{\opac}(\MMM,s,\psi) = \true \}$;

     \end{itemize}
   \item [(iii)] Check whether $s \in \SAT(\phi)$.
\end{itemize}

Furthermore, for the opacity operator $\OPAC{\psi}$, 
we compute $\SAT(\OPAC{\psi})$ as: $s \in \SAT(\OPAC{\psi})$ iff $\T_{\opac}(\MMM,s,\psi) = \true$, $\T_{\opac}(\MMM,s,\psi) $ is sketched in Algorithm~\ref{algo:opac}. 
Specifically, we compute all (regular-expression-like formatted) traces 
$\Lambda$ ($\Lambda'$) starting from $s$ and satisfying (violating) $\psi$, and check if each such trace in $\Lambda$ is observationally covered by a such trace in $\Lambda'$. 
Alogrithm~\ref{algo:compU} $\textbf{compU}(\MMM, s, \phi, \phi')$ computes a set of such formatted traces satisfying $\phi \U \phi'$.
Similarly, an algorithm $\textbf{compR}(\MMM, s, \phi, \phi')$ can be proposed to compute a set of such formatted traces satisfying $\phi \R \phi'$.

\small{\begin{algorithm}[h!]
 \SetAlgoLined
  \KwData{$\MMM, s, \psi$}
  \KwResult{$\OPAC{\psi}$}
  \Switch{$\psi$} {
	\textbf{case} $\X \phi$: \ \ \ $\SAT(\psi) \leftarrow \cup_{i \in \NAT}\{tr(s \to s_i) \mid \POST(s)=s_i  \land s_i \in \SAT(\phi)\}$, \\ 
	$\qquad \qquad \quad \ \SAT(\neg\psi) \leftarrow \cup_{i \in \NAT}\{tr(s \to s_i) \mid \POST(s)=s_i  \land s_i \in \SAT(\neg\phi)\}$\;
	\textbf{case} $\phi \U \phi'$: $\SAT(\psi) \leftarrow \textbf{compU}(\MMM, s, \phi, \phi')$, \\ 
	$\qquad \qquad \quad \ \SAT(\neg\psi) \leftarrow \textbf{compR}(\MMM, s, \neg\phi, \neg\phi')$\;
	\textbf{case} $\phi \R \phi'$: $\SAT(\psi) \leftarrow \textbf{compR}(\MMM, s, \phi, \phi')$, \\ 
	$\qquad \qquad \quad \ \SAT(\neg\psi) \leftarrow \textbf{compU}(\MMM, s, \neg\phi, \neg\phi')$\;
	}  
  	$\Lambda \leftarrow  \{\lambda \mid \lambda \in \SAT(\psi)\}$;
  	$\Lambda' \leftarrow \{\lambda \mid \lambda \in \SAT(\neg\psi)\}$\;
	\For{each $\lambda \in \Lambda$}
	{
		$\mathtt{find} \leftarrow \false$\;
		\For {each $\lambda' \in \Lambda'$}
		{
	  	\If{$\obs(\lambda) \subseteq \obs(\lambda')$}
	  	{
	    	$\mathtt{find} \leftarrow \true$; break \;
	  	}
	  }
	  \textbf{if } $(\lnot \mathtt{find})$ \ \textbf{then } \ \Return{$\mathtt{find}$}\;
	 }
  \Return{$\mathtt{find}$}.
  \caption{Translating $\OPAC{\psi}$: $\T_{\opac}(\MMM,s,\psi)$}
 \label{algo:opac}
\end{algorithm}}

\small{\begin{algorithm}[h!]
 \SetAlgoLined
  \KwData{$\MMM, s, \phi, \phi'$}
  \KwResult{$\SAT(\phi \U \phi')$: a set of regular-expression-like formatted traces whose corresponding paths satisfying $\phi \U \phi'$}
    $\Lambda \leftarrow \{\}$; \ $i \leftarrow 0$ \;
	\For{each $t_i \in \SAT(\phi')$}
	{
	    $T_i \leftarrow \{t_i\}; \ \Pi_i \leftarrow \{\pi \mid \pi(0)=t_i\}$\;
	    \While{$\{s_j \in \SAT(\phi) \setminus (T_i \cup \SAT(\phi')) \mid \POST(s_j) \cap T_i \ne \emptyset\} \ne \emptyset$}
	    {
	    	let $s_j \in \{s_j \in \SAT(\phi) \setminus (T_i \cup \SAT(\phi')) \mid \POST(s_j) \cap T_i \ne \emptyset\}$\;
	    	\If{$s_j \in \POST(s_j) \cap T_i$}
	    	{
	    		/* There is a self-loop: wrap it with a star and concatenate paths starting from a state in $\POST(s_j) \cap T_i$ found earlier */ \\
	    		\For {each $\pi' \in \Pi_i$ s.t. $\pi'(0) \in \POST(s_j) \cap T_i$}
	    		{
	    			$\Pi_i \leftarrow \Pi_i \cup \{(s_j \TRANS{a} s_j)^* + \pi'\lbrack 1... \rbrack \}$\;
	    		}
	    	}	    
	    	\For {each: $q_1\in \POST(s_j) \cap T_i, q_2 \in \POST(q_1) \cap T_i, \dots, q_n \in \POST(q_{n-1}) \cap T_i$ s.t. $\POST(q_{n}) \cap T_i = \emptyset$}
	    	{
	    		\If{$\PRE(s_j) \not\in \{q_1, q_2, \dots q_n\}$}
	    		{
	    			\For {each $\pi' \in \Pi_i$ s.t. $\pi'(0) \in \POST(s_j) \cap T_i$}
	    			{
	    			$\Pi_i \leftarrow \Pi_i \cup \{s_j \TRANS{a} q_1 + \pi'\lbrack 1... \rbrack \}$\;
	    			}
	    		}
	    		\ElseIf {$\PRE(s_j) = q_n \land s_j \in \POST(q_{n})$}
	    		{
	    			/* There is a cycle, wrap it with a star and concatenate paths starting from a state in $\POST(s_j) \cap T_i$ found earlier */ \\
	    		
	    			\For {each $\pi' \in \Pi_i$ s.t. $\pi'(0) \in \POST(s_j) \cap T_i$}
	    			{
	    			$\Pi_i \leftarrow \Pi_i \cup \{(s_j \TRANS{a_1} q_1 \TRANS{a_2} \dots \TRANS{a_{n}} \PRE(s_j) \TRANS{a_{n+1}} s_j)^* + \pi'\lbrack 1... \rbrack\}$\;
	    			}
	    		}
	    	}
	    	$T_i \leftarrow T_i \cup \{s_j\}$\;
	    	}
	    $\Lambda_i = \{\lambda \leftarrow \ERASE{\pi} \mid \forall \pi \in \Pi_i \}$;  \
	    $\Lambda \leftarrow\Lambda \cup \Lambda_i$; \ 
	    $i \leftarrow i+1$\;
	 }
  \Return{$\Lambda$}.
  \caption{Computing $\SAT(\phi \U \phi')$: \bf{compU}($\MMM, s, \phi ,\phi'$)}
 \label{algo:compU}
\end{algorithm}}

\begin{theorem}\label{theo:soundness-opacctl}
[Soundness of $\OPAC{\psi}$ translation]
Given a model $\MMM$, a state $s$, and a path formula $\psi$:
\[\MMM \models_s \OPAC{\psi} \quad \text{iff} \quad \T_{\opac}(\MMM,s,\psi) = \true.\]
\end{theorem}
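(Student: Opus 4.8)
The plan is to reduce this soundness claim to the semantic characterisation already established in the first theorem, which tells us that $\MMM \models_s \OPAC{\psi}$ holds iff $\TR{\psi,s}$ (equivalently, the set of $\psi$-satisfying paths from $s$) is covered observationally by the $\psi$-violating paths from $s$. So it suffices to prove that $\T_{\opac}(\MMM,s,\psi) = \true$ iff every path from $s$ satisfying $\psi$ is observationally equivalent to some path from $s$ violating $\psi$. The argument will have two layers: first, correctness of the \textbf{compU}/\textbf{compR} subroutines (the claim that the regular-expression-like trace sets $\SAT(\psi)$ and $\SAT(\neg\psi)$ computed in the \texttt{switch} block exactly represent, respectively, $\{\ERASE{\pi} : \pi \in \RUN{\LLL,s},\ \MMM \models_\pi \psi\}$ and its complement within traces from $s$); and second, correctness of the final double \textbf{for} loop, which simply checks the inclusion $\obs(\Lambda) \subseteq \obs(\Lambda')$ witness-by-witness.

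First I would set up notation: let $\Pi_\psi = \{\pi \in \RUN{\LLL,s} : \MMM \models_\pi \psi\}$ and $\Pi_{\neg\psi} = \RUN{\LLL,s} \setminus \Pi_\psi$. The key lemma is that the finite set $\Lambda$ of formatted traces produced by the \texttt{switch} statement satisfies $\bigcup_{\lambda \in \Lambda} \sem{\lambda} = \ERASE{\Pi_\psi}$, where $\sem{\lambda}$ denotes the (possibly infinite) set of traces matching the regular expression $\lambda$, and likewise $\bigcup_{\lambda' \in \Lambda'} \sem{\lambda'} = \ERASE{\Pi_{\neg\psi}}$. For the $\X\phi$ case this is immediate from the semantics of $\X$ and determinism. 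For $\phi \U \phi'$ I would argue that \textbf{compU} performs a backward reachability from $\SAT(\phi')$ through $\SAT(\phi)$, and the loop invariant of the \texttt{while} is that $\Pi_i$ records exactly the finite-prefix-plus-tail decompositions of all paths witnessing $\phi\U\phi'$ with the $\phi'$-state equal to $t_i$; the star-wrapping cases handle self-loops and cycles inside $\SAT(\phi)$ so that the finitely many formatted expressions cover the infinitely many concrete witnessing paths. Dually, $\SAT(\neg\psi)$ for $\psi = \phi\U\phi'$ is computed by \textbf{compR} on $\neg\phi,\neg\phi'$, which is sound precisely because of the PNF rewrite rule $\neg(\phi\U\phi') \leadsto \neg\phi \R \neg\phi'$ justified in step (i) of the verification procedure; the $\phi\R\phi'$ case is symmetric. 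I would then invoke the $\obs$ structural constraint ($\obs$ distributes over concatenation around $\bot$, and in practice factors through a map on labels) to conclude that $\obs(\sem{\lambda}) \subseteq \obs(\sem{\lambda'})$ can be decided on the representatives, so the test $\obs(\lambda)\subseteq\obs(\lambda')$ in Algorithm~\ref{algo:opac} is the right one.

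Once the representation lemma is in hand, the rest is bookkeeping: the final nested loop returns $\false$ exactly when some $\lambda \in \Lambda$ has $\obs(\sem\lambda) \not\subseteq \bigcup_{\lambda'\in\Lambda'}\obs(\sem{\lambda'})$, i.e. exactly when some $\psi$-witnessing trace from $s$ is not observationally covered by a $\psi$-violating trace from $s$, i.e. exactly when $\TR{\psi,s}$ is \emph{not} semantically opaque; and it returns $\true$ otherwise. Chaining this equivalence with the first theorem gives $\T_{\opac}(\MMM,s,\psi) = \true \iff \MMM\models_s\OPAC{\psi}$, and since $\SAT(\OPAC{\psi})$ was defined to collect exactly the states where $\T_{\opac}$ returns $\true$, the claim follows.

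The main obstacle I anticipate is the representation lemma for \textbf{compU} (and \textbf{compR}): one must show both completeness (every concrete witnessing path's erased trace lies in some $\sem{\lambda}$ — this is where the self-loop and cycle star-wrapping cases must be shown to be exhaustive, covering all ways a $\phi$-plateau can be traversed before reaching the $\phi'$-state) and soundness (no $\lambda$ generates a trace whose underlying path fails $\psi$ — here one needs the $\PRE(s_j) \notin \{q_1,\dots,q_n\}$ guards to rule out spurious concatenations). Because the algorithm builds finite regular expressions over an infinite path set, the proof of this lemma is an induction on the \texttt{while}-loop iterations with a carefully stated invariant, and getting that invariant exactly right — especially reconciling it with the circular, deterministic, well-structured assumptions on $\LLL$ — is the delicate part; everything downstream is routine.
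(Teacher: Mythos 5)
Your proposal is correct and follows essentially the same route as the paper: the paper's proof is exactly the chain of equivalences you describe, passing from the return value of $\T_{\opac}(\MMM,s,\psi)$ through the formatted-trace sets $\SAT(\psi)$ and $\SAT(\neg\psi)$ to the semantic satisfaction relation for $\OPAC{\psi}$. The only difference is one of rigour in your favour: the paper compresses the correspondence between formatted traces and concrete paths into a single asserted biconditional, whereas you correctly isolate it as a representation lemma about \textbf{compU}/\textbf{compR} that would require its own inductive argument.
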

\begin{proof}
The proof is obtained by the satisfaction relation of $\OPAC{\psi}$
and the construction of the translation $\T_{\opac}(\MMM,s,\psi)$. Let $\pi$ (and $\pi'$) denotes a corresponding path of the regular-expression-like formatted trace $\lambda$ (and $\lambda'$ respectively).
\begin{eqnarray*}
\T_{\opac}(\MMM,s,\psi) = \true 
& \Leftrightarrow &
 \forall \lambda \in \SAT(\psi): 
 \exists \lambda' \in \SAT(\neg\psi). 
 \obs(\lambda) \subseteq \obs(\lambda') \\
& \Leftrightarrow &
 \forall \pi. \ERASE{\pi} \in \lambda \in \SAT(\psi): \\
 &&
 \exists \pi'. \ERASE{\pi'} \in \lambda' \in \SAT(\neg\psi). 
 \obs(\pi) = \obs(\pi') \\
 & \Leftrightarrow &
 \forall \pi \in \RUN{\LLL,s}.\MMM_{\pi} \models \psi: \\
 &&
 \exists \pi' \in \RUN{\LLL,s}.\MMM_{\pi} \not\models\psi.(\obs(\pi) = \obs(\pi')) \\
 & \Leftrightarrow &
 \MMM \models_s \OPAC{\psi}.
 \qquad\qquad\qquad\qquad\qquad\qquad\qquad\qquad\qquad\qquad \Box
\end{eqnarray*}
\end{proof}

Due to the recursive nature of the CTL model checking algorithm,
complexity is linear in the size of the non-opacity state formula $\phi$,
while the worst case of finding opaque paths for reachability objectives $\psi$ 
and checking satisfaction of the opacity state formula $\OPAC{\psi}$, 
specified in Algorithm~\ref{algo:opac}, 
is EXPSPACE.
Note that the algorithm traverses all traces satisfying $\psi$ and all traces violating $\psi$, and conducts observation equivalence comparison.
So the worst case complexity here follows the complexity of the hyper property model checking problem with two quantifier ($\forall$) alternations, and thus EXPSPACE~\cite{BonakdarpourFinkbeiner21}.




The until operator allows to derive the temporal modality $\F$ (``eventually'') as usual: $\F\phi \ {\buildrel\rm def\over=} \ \true~ \U~ \phi$. To simplify expression in the examples, by abuse of notation, we use $\F s$ to denote $\F \phi$ (\ie $\phi$ will be eventually true which takes place on state $s$).

\begin{example}
\label{eg:opctl-1}
Consider the system $\MMM$ accepting finite inputs presented in Fig.~\ref{fig:eg-opacctl} (a).
Let $s_3$ be a sensitive state, $s_0$ be a starting state, 
$\{s_3, s_6\}$ be two final states of interests,
and the observation function be: 
$a \to a,\, b\to \epsilon,\, c\to \epsilon$, i.e. $b, c$ are hidden, but $a$ is visible.
Consider the security property \emph{eventually reaching $s_3$}, 
\ie $\psi = \F s_3$ in our logic. The operator $\F\phi$ can be defined as $\true~ \U~ \phi $, so:
\[ 
\TR{\psi,s_0} = \{ac(b)^*a(\bot)\!^{*}\}, 
\TR{\neg \psi,s_0} = \{aba(c)^*(\bot)\!^{*}\}. 
\]
Here we use e.g.~$ac(b)^*a(\bot)\!^{*}$ to represent the infinite trace that starts with
$ac$, followed by possibly infinitely many $b$, followed by a, and followed by possibly infinitely many $\bot$.  Clearly:
\[\obs(\psi) = \obs(\neg \psi) = \{aa(\bot)\!^{*}\},\] 
so: $\MMM \models_{s_0} \opac \lbrack
\psi \rbrack$, \ie the system is $\psi$-opaque.
\end{example}
\begin{example}
\label{eg:opctl-2}
Consider the system accepting infinite inputs presented in Fig.~\ref{fig:eg-opacctl} (b).
Let $s_2$ be a sensitive state, $s_0$ be a starting state, 
$\{s_2, s_6\}$ be two final states of interests,
and the observation function be: 
$a \to \epsilon,\, b\to b,\, c\to \epsilon$, i.e.~$a$ and $c$ are hidden, but $b$ isn't.
Consider the security property \emph{eventually reaching $s_2$}, 
\ie $\psi = \F s_2$:
\[\TR{\psi,s_0} = \{ab(ab)^{*}\},\qquad
\TR{\neg \psi,s_0} = \{acbc(bc)^{*}\}\]
clearly, $\obs(\psi) = \obs(\neg \psi) = \{bb^{*}\}$, 
so the system is $\psi$-opaque, \ie
$
\MMM \models_{s_0} \OPAC{\psi}.
$

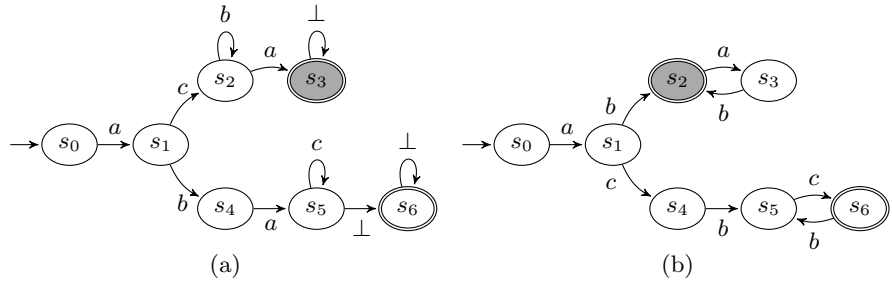
\begin{figure}[!h]
\centering
  \begin{tikzpicture}[->,>=stealth',shorten >=1pt,auto,node distance=1.2cm, scale = 1, transform shape]

  \node[initial,state] (A) {$s_0$};
  \node[state] (B) [right of=A] {$s_1$};
  \node[state] (C) [above right of=B] {$s_2$};
  \node[state, accepting,fill={rgb:black,1;white,2}] (D) [right of=C] {$s_3$};
  \node[state] (E) [below right of=B] {$s_4$};
  \node[state] (F) [right of=E] {$s_5$};
  \node[state, accepting] (G) [right of=F] {$s_6$};
  \node[align = center, below =2mm of E] {(a)};

   \path[every node/.style={font=\sffamily\small,
   		fill=white,inner sep=1pt}]
     (A) edge node [above=1mm]{$a$} (B)
     (B) edge [bend left=20] node[above=1mm] {$c$} (C)
	 				 edge [bend right=20] node[below=1mm] {$b$} (E)
     (C) edge [loop above] node[above=1mm] {$b$} (C)
     			 edge [bend left=20] node [above=1mm]{$a$} (D)
     (D) edge [loop above] node[above=1mm] {$\bot$} (D)
     (E) edge node [below=1mm]{$a$} (F)
     (F) edge [loop above] node[above=1mm] {$c$} (F)
     			 edge node [below=1mm]{$\bot$} (G)
     (G) edge [loop above] node[above=1mm] {$\bot$} (G)
;
\end{tikzpicture}
 \begin{tikzpicture}[->,>=stealth',shorten >=1pt,auto,node distance=1.2cm, scale = 1, transform shape]

  \node[initial,state] (A) {$s_0$};
  \node[state] (B) [right of=A] {$s_1$};
  \node[state, accepting,fill={rgb:black,1;white,2}] (C) [above right of=B] {$s_2$};
  \node[state] (D) [right of=C] {$s_3$};
  \node[state] (E) [below right of=B] {$s_4$};
  \node[state] (F) [right of=E] {$s_5$};
  \node[state, accepting] (G) [right of=F] {$s_6$};
  \node[align = center, below =2mm of E] {(b)};

   \path[every node/.style={font=\sffamily\small,
   		fill=white,inner sep=1pt}]
     (A) edge node [above=1mm]{$a$} (B)
     (B) edge [bend left=20] node[left=2mm] {$b$} (C)
	 				 edge [bend right=20] node[left=2mm] {$c$} (E)
     (C) edge [bend left=20] node [above=1mm]{$a$} (D)
     (D) edge [bend left=20] node [below=1mm]{$b$} (C)
     (E) edge node [below=1mm]{$b$} (F)
     (F) edge [bend left=20] node [above=1mm]{$c$} (G)
     (G) edge [bend left=20] node [below=1mm]{$b$} (F)
;
\end{tikzpicture}
\caption{\opacctl over finite and infinite languages. Grey node denotes a sensitive state.}
\label{fig:eg-opacctl}
\end{figure}
\end{example}

\subsection{Link to non-interference}
Information flow policies are designed to ensure that secret data does not influence publicly observable data.
Non-interference (NI)~\cite{GoguenMeseguer82} essentially says that low security users should not be aware of the activity of high security users. This can be interpreted in many different ways including as a hyperproperty of paths of a system. Appropriate to an LTS is to consider a system with labels classified into two security levels: \emph{low} ($\L$) and \emph{high} ($\H$). The system satisfies non-interference if and only if any sequence of low-level labels that can be produced by the LTS is consistent with all possible sequences of high-level labels. We formalise this notion of NI.

Let $trace(\LLL)$ be the set of all traces produced by paths of $\LLL$. We consider projections of these traces. Let $\LLL$ be an LTS over an alphabet $\Sigma$ which is partitioned into two disjoint sets of labels $\H$ and $\L$.  A \emph{high projection} of a trace $\lambda$ is the sequence of labels  in $\H$ that remains after all labels in $\L$ have been deleted from the original trace,
denoted by $\lambda_{\H}$. 
Similarly a \emph{low projection} of a trace is the sequence of labels in $\L$ that remains after all labels in $\H$ have been deleted from the original,
denoted by $\lambda_{\L}$. 

\begin{definition}[Trace Consistency]
Let $t_1, t_2 \in \Sigma^*$ be \emph{consistent} with each other, written $t_1 \approx t_2$ iff $\exists t \in trace(\LLL)$ such that $t_1$ and $t_2$ are both projections (can be high or low) of $t$.
\end{definition}


Intuitively, non-interference requires that any observable low trace projection produced by the LTS must be consistent with every high trace projection that can be produced by the LTS.
\begin{definition}[Non-interference]
\label{def:ni}
Let $\LLL$ be an LTS, $\Sigma=(\H,\L)$. Let $trace_{\L}(\LLL) = \{l \in {\L}^* \mid \exists t \in trace(\LLL) \wedge t_{\L} \approx l\}$ and similarly define $trace_{\H}(\LLL)$. We say $\LLL$ satisfies \emph{non-interference} iff 
\[ \forall l \in trace_{\L}(\LLL), \forall h \in trace_{\H}(\LLL): l \approx h.\]
\end{definition}

This is a quite restrictive property but intuitive to derive from the original definition of non-interference. 

Non-interference can be expressed in terms of opacity.  Suppose $\LLL$ is an LTS that satisfies non-interference. Assuming that there is more than one high trace in $trace_H(\LLL)$ implies that every high trace projection produced by the LTS is opaque since for a given trace, $t \in trace(\LLL)$, with high trace projection $h$ and low trace projection $l$, every possible high trace projection is consistent with $l$ so there is a trace $t'$ with the same low projection but a different high projection which \emph{covers} $t$.  
Formally, any non-interference property in the sense of Definition~\ref{def:ni} can be reduced to an opacity property with a corresponding static observation function $\obs$.
Let $\MMM=(\LLL,\eta,\obs)$, 
where $\LLL=({\H \cup \L}, S, \FS, F)$ be an LTS and satisfy non-interference.
Consider $\psi$ be a path formula over $\RUN{\LLL, s}$ so that,
$\forall \pi \in \RUN{\LLL, s}.\MMM\models_{\pi} \psi$ iff
for any $\pi' \in \RUN{\LLL, s}$: $\TR{\pi}_{\L} \neq \TR{\pi'}_{\L} \lor \TR{\pi}_{\H} = \TR{\pi'}_{\H}$.
To transform the non-interference property to an opacity property,
we propose observation function $\obs$ such that only low-level labels are observable, \ie
$\forall \pi \in \RUN{\LLL, s}.\obs(\pi) = \TR{\pi}_{\L}$.
$\LLL$ satisfying non-interference implies that 
for any path $\pi$ with high projection $\TR{\pi}_{\H}$
and low projection $\TR{\pi}_{\L}$, 
every possible high trace projection is consistent with $\TR{\pi}_{\L}$,
so there is $\pi'\in \RUN{\LLL, s}$: $\TR{\pi'}_{\L} = \TR{\pi}_{\L} \land \TR{\pi'}_{\H} \neq \TR{\pi}_{\H}$,
\ie $\exists \pi' \in [\![\neg \psi]\!].(\obs(\pi') = \obs(\pi))$,
this implies that $\psi$ is opaque with respect to $\obs$,
and thus can be expressed as $\OPAC{\psi}$.

On the other hand, suppose that $\LLL$ is an LTS in which every high trace projection is opaque. This is not sufficient to imply non-interference as the set of properties is not strong enough. In particular an LTS whose trace set is $\{h_1l_1, h_2l_1, h_2l_2, h_3l_1, h_3l_2\}$ satisfies opacity but not non-interference. Schoepe and Sabelfeld \cite{SchoepeS15} prove equivalence between the two notions for input-output  (i.e. length two) traces when the set of opaque properties is strong enough to characterise every possible information leak. They also have examples that demonstrate the utility and flexibility of opacity in comparison to non-interference.

\section{Probabilistic Opaque Temporal Logic \opacpctl}
\label{sec:opacpctl}
This section extends \opacctl with probabilistic operators that allow the construction of probabilistic models
for  quantitative opacity analysis and verification. 
For the purpose of security analysis,
the notion of \emph{probabilistic opacity}~\cite{BryansKM12} defines 
the likelihood of a path in predicate $\varphi$
which is not covered by a path in $\bar\varphi$ from the observer's view:
the smaller the notion the more secure the system. 
\begin{definition}[Degree of opacity]
The degree of $\psi$-opaque property is defined as:
\[
{\D}(\OPAC{\psi}) = 
{\prob}(\psi \setminus \obs^{-1}(\obs(\bar \psi))),
\]
\ie the probability that a $\psi$-trace is not opaque,
where ${\prob}(x)$ denotes the probability of $x$,
$\psi$ and $\bar \psi$ represent the set of traces satisfying and violating property $\psi$ respectively.
\end{definition}
We propose to add a probabilistic operator in our logic to capture this definition (the degree of opacity).

\subsection{Probabilistic model}

We generalise Definition~\ref{def_lts} to cater for
probabilities. Below $\dist(X)$ denotes the set of discrete
probability distribution over a set $X$.

\begin{definition}
A \emph{probabilistic labelled transition system} (pLTS) is a tuple
$\LLL = (S, \Sigma, \PPP, F)$. Here $S$ and $\Sigma$ are states, and
labels, respectively.  $\PPP : S \to {\dist}(\Sigma \times S)$ is the
probabilistic transition relation.
\end{definition}

\begin{definition}
Given a pLTS $\LLL = (S, \Sigma, \PPP, F)$, we construct an LTS $(S,
\Sigma, \FS, F)$ by setting $\FS\ = \{(s, l, s')\ |\ \PPP(s)(l,s') >
0\}$. By ``abus de notation'' we will also use $\LLL$ to refer to
the induced LTS $(S, \Sigma, \FS, F)$.  We say a pLTS is
\emph{deterministic} if the induced LTS is deterministic in the sense
of Section~\ref{sec:prelim}.
\end{definition}

\begin{definition}
A \emph{probabilistic model} is a tuple $\MMM = (\LLL, \eta, obs)$ where:
$\LLL$ is a deterministic pLTS, $\eta$ is a state labelling function,
and $obs$ is an observation function.
Each probabilistic model $\MMM$ induces a model in the sense of \S
\ref{def_model}, by replacing $\MMM$'s pLTS with the induced
LTS. Abusing notation once more, we will also use $\MMM$ to denote
this induced model.
\end{definition}

\subsection{\opacpctl}
To allow quantitative verification, we add the probabilistic operators to the state formulae:
\[
      \phi \ ::=\  ... 
      ~|~ \P_{\bowtie p} \lbrack \psi \rbrack
      ~|~ \P_{\bowtie p} \lbrack \OPAC{\psi} \rbrack
            \qquad            \qquad
      \psi \ ::= \ ...
\]
Here ${\bowtie} \in \{\le, <, \ge, >\}$, $p \in \lbrack 0,1 \rbrack$.  
A state satisfies a probabilistic operator $\P_{\bowtie p} \lbrack  \OPAC{\psi}
\rbrack$ if the quantity of $\OPAC{\psi}$ is ${\bowtie} p$. 
It is standard to extend $\P_{\bowtie p}$ to path formulae $\psi$ as PCTL, \ie $\P_{\bowtie p} \lbrack \psi \rbrack$,
the procedure can be found in \eg~\cite{Baier2008}. 
The semantics of the probabilistic opacity operator is given as:
\begin{eqnarray*}
  \MMM \models_s \P_{\bowtie p} \lbrack \OPAC{\psi} \rbrack  
      & \qquad\text{iff}\qquad &  
      \D (\OPAC{\psi}) \bowtie p
\end{eqnarray*}

\noindent Note that the satisfaction relations $\models_{\pi}$ and
$\models_{s}$ work on the induced model in the sense of \S
\ref{def_model}, not the probabilistic model itself. This is standard
in probabilistic model checking, see e.g.~\cite{Baier2008}.


\begin{theorem}
Given a probabilistic model $\MMM = (\LLL, \eta, obs)$ and a state $s$
in $\LLL$. Let $\psi$ be a path formula, and: 
$\Pi = \{\pi \in \RUN{\LLL, s}\ |\ \MMM\models_{\pi} \psi  \land \pi~\text{is semantically observable} \}$,
then we have:
   \begin{eqnarray*}
   \MMM \models_s \P_{\bowtie p} \lbrack \OPAC{\psi} \rbrack
      & \quad \text{iff}  \quad &
   \prob ( \Pi )  \bowtie p 
   \end{eqnarray*}
   Here semantically observable is to be understood in the sense of
   Definition \ref{defn_opacity}.
\end{theorem}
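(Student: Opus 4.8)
The plan is to unfold both sides of the claimed equivalence and show they reduce to the same probabilistic statement, reusing the previously established characterisation of $\OPAC{\psi}$. First I would recall that by the definition of the probabilistic opacity operator, $\MMM \models_s \P_{\bowtie p}\lbrack \OPAC{\psi} \rbrack$ holds iff $\D(\OPAC{\psi}) \bowtie p$, and by the definition of the degree of opacity, $\D(\OPAC{\psi}) = \prob\bigl(\psi \setminus \obs^{-1}(\obs(\bar\psi))\bigr)$. So the entire theorem amounts to verifying the set identity
\[
\psi \setminus \obs^{-1}(\obs(\bar\psi)) \;=\; \Pi \;=\; \{\pi \in \RUN{\LLL,s} \mid \MMM\models_{\pi}\psi \ \wedge\ \pi \text{ is semantically observable}\},
\]
after which applying $\prob(\cdot)$ to both sides and comparing with $p$ under $\bowtie$ finishes the argument. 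Here I read $\psi$ on the left, following the paper's stated convention, as the set of paths (from $s$) satisfying $\psi$, i.e.\ $\{\pi \in \RUN{\LLL,s} \mid \MMM\models_\pi \psi\}$, and $\bar\psi$ as its complement within $\RUN{\LLL,s}$.

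The core step is therefore the set identity, which I would prove by double inclusion at the level of individual paths. A path $\pi$ lies in the left-hand side iff $\MMM\models_\pi\psi$ and $\pi \notin \obs^{-1}(\obs(\bar\psi))$, i.e.\ there is no $\pi' \in \RUN{\LLL,s}$ with $\MMM\not\models_{\pi'}\psi$ and $\obs(\pi)=\obs(\pi')$. By Definition~\ref{defn_opacity}, the condition ``$\pi \in \psi$ and $\pi \notin \obs^{-1}(\obs(\bar\psi))$'' is exactly what it means for $\pi$ to be an \emph{observable} path of the predicate $\psi$ (the witness in the nonemptiness condition $\psi \setminus \obs^{-1}(\obs(\bar\psi)) \neq \emptyset$), which by the theorem's own terminological stipulation is ``$\pi$ is semantically observable''. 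Hence membership in the left-hand side coincides with ``$\MMM\models_\pi\psi$ and $\pi$ semantically observable'', which is membership in $\Pi$. This inclusion is essentially definitional once the convention identifying $\psi$ with its path set is made explicit.

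Having the set identity, I would then push $\prob$ through: $\prob(\psi \setminus \obs^{-1}(\obs(\bar\psi))) = \prob(\Pi)$, so $\D(\OPAC{\psi}) = \prob(\Pi)$, and thus $\MMM \models_s \P_{\bowtie p}\lbrack\OPAC{\psi}\rbrack$ iff $\D(\OPAC{\psi}) \bowtie p$ iff $\prob(\Pi) \bowtie p$, which is the claim. I expect the main obstacle to be not any deep argument but a matter of hygiene: making precise the overloaded notation ``$\psi$'' (path formula versus its induced set of paths versus its induced set of traces), confirming that the probability measure $\prob$ on $\RUN{\LLL,s}$ induced by the deterministic pLTS is well defined on the (measurable, in fact regular) set $\Pi$ — which follows because, as in the nonprobabilistic verification section, $\psi$-satisfying paths and $\obs$-preimages of $\obs(\bar\psi)$ are describable by $\omega$-regular / regular-expression-like trace sets — and checking that ``semantically observable'' in the theorem statement is meant verbatim as the parenthetical clause in Definition~\ref{defn_opacity} rather than ``semantically finite and observable''. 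Once these notational points are pinned down, the proof is the short chain of equivalences above.
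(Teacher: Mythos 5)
Your proposal is correct and follows essentially the same route as the paper: both unfold $\P_{\bowtie p}\lbrack\OPAC{\psi}\rbrack$ via $\D(\OPAC{\psi})=\prob(\psi\setminus\obs^{-1}(\obs(\bar\psi)))$, identify that set pathwise with the $\psi$-satisfying paths not covered by any observationally equivalent $\psi$-violating path (i.e.\ the observable paths of Definition~\ref{defn_opacity}), and conclude $\prob(\Pi)\bowtie p$. Your explicit factoring into a set identity plus your remarks on measurability and the overloading of $\psi$ are sensible hygiene but do not change the argument.
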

\begin{proof}
By the definition of the semantics of the probabilistic operator, we have: 
\begin{eqnarray*}
\MMM \models_s \P_{\bowtie p} \lbrack \OPAC{\psi} \rbrack 
& \Leftrightarrow &
\D(\OPAC{\psi}) \bowtie p 
~ \Leftrightarrow ~
\prob(\psi \setminus \obs^{-1}(\obs(\bar \psi))) \bowtie p \\
& \Leftrightarrow &
\prob(\{\pi \in \RUN{\LLL,s} \mid \MMM \models_{\pi} \psi\} 
~ \setminus ~\\
&& 
\quad \{\obs^{-1}(\obs(\pi')) \in \RUN{\LLL,s} \mid \MMM \models_{\pi'} \neg\psi \}
) ~ \bowtie p ~\\
& \Leftrightarrow &
\prob(\{\pi \in \RUN{\LLL,s} \mid \MMM \models_{\pi} \psi ~ \land ~ \\
&& 
\quad \not\exists \pi'.(\MMM \models_{\pi'} \neg\psi \land \obs(\pi) = \obs(\pi'))\} 
) \bowtie p \\
& \Leftrightarrow &
\prob(\{\pi \in \RUN{\LLL,s} \mid \MMM \models_{\pi} \psi ~\land~ \\
&& 
\quad \pi \text{ is not covered by a path violating } \psi\}) \bowtie p \\
& \Leftrightarrow &
\prob(\Pi) \bowtie p 
\qquad \qquad \qquad \qquad \qquad \qquad \qquad \qquad \hfill\Box
\end{eqnarray*}
\end{proof}

\begin{example}
\label{eg:opacpctl}
Consider the model presented in Fig.~\ref{fig:eg-opacpctl}.
Let $s_3$ be a sensitive state, $s_0$ be a starting state, 
$\{s_3,s_6\}$ be final states of interests,
and the observation function be: 
$a \to a,\, b\to b,\, c\to \epsilon$.
Consider the security property be \emph{eventually reaching $s_3$}, \ie $\psi= \F s_3$, so:
\[
tr(\psi) = \{ ac(b)^*a(\bot)\!^{*}\}, \qquad
tr(\neg \psi) = \{aba(c)^*(\bot)\!^{*}\}.
\]
Let $0.1$ be the threshold quantity, 
$\obs(\psi) = \{a(b)^*a(\bot)\!^{*}\}$,  
$\obs(\neg \psi) = \{aba(\bot)\!^{*}\}$, so:
\begin{eqnarray*}
\P_{\le 0.1} ( \OPAC{\psi})
& \Leftrightarrow& \P_{\le 0.1} \{\psi \setminus \obs^{-1} (\obs(\neg\psi) )\}\\
& \Leftrightarrow& \P_{\le 0.1} (\{aca\bot\!^{*}, acbba\bot\!^{*}, \dots\}) \\
& \Leftrightarrow& \frac{1}{3} * \frac{1}{2} + \frac{1}{3} *(\frac{1}{2})^2*\frac{1}{2} + \dots + \frac{1}{3} *(\frac{1}{2})^n \frac{1}{2} \le 0.1 \\
& \Leftrightarrow& \frac{1}{4} \le 0.1 = \false
\end{eqnarray*}
\begin{figure}[!h]
\centering
  \begin{tikzpicture}[->,>=stealth',shorten >=1pt,auto,node distance=1.5cm, scale = 1, transform shape]
  \node[initial,state] (A) {$s_0$};
  \node[state] (B) [right of=A] {$s_1$};
  \node[state] (C) [above right of=B] {$s_2$};
  \node[state, accepting,fill={rgb:black,1;white,2}] (D) [right of=C] {$s_3$};
  \node[state] (E) [below right of=B] {$s_4$};
  \node[state] (F) [right of=E] {$s_5$};
  \node[state, accepting] (G) [right of=F] {$s_6$};

   \path[every node/.style={font=\sffamily\small,
   		fill=white,inner sep=1pt}]
     (A) edge node [above=1mm]{$1.a$} (B)
     (B) edge [bend left=20] node[left=1.5mm] {$\frac{1}{3}.c$} (C)
	 				 edge [bend right=20] node[left=1.5mm] {$\frac{2}{3}.b$} (E)
     (C) edge [loop above] node[above=1mm] {$\frac{1}{2}.b$} (C)
     			 edge node [above=1mm]{$\frac{1}{2}.a$} (D)
     (D) edge [loop above] node[above=1mm] {$\bot$} (D)
     (E) edge node [below=1mm]{$1.a$} (F)
     (F) edge [loop above] node[above=1mm] {$\frac{1}{2}.c$} (F)
     			 edge node [below=1mm]{$\frac{1}{2}.\bot$} (G)
     (G) edge [loop above] node[above=1mm] {$\bot$} (G)
;
\end{tikzpicture}
\caption{Example: \opacpctl}
\label{fig:eg-opacpctl}
\end{figure}
\end{example}

\subsection{Verification of \opacpctl}
%
%
%
%
%
Intuitively, verification of probabilistic opacity answers the question 
``is the system opaque?'' quantitatively, relative to a secret property $\psi$ 
and the observability of the adversary given a threshold probability $p$.
Given a probabilistic model $\MMM$, a starting state $s$, 
and a property $\psi$ required to be secure, 
the probabilistic verification problem of opacity property $\OPAC{\psi}$ 
is to decide whether 
$\MMM \models_s \P_{\bowtie p} (\OPAC{\psi})$ holds or not.
%
Algorithm~\ref{algo:p-opac} presents the procedure of finding all \textit{probabilistic non-opaque (observable)} traces 
$p\Lambda(s,\bar \odot\psi)$ starting at $s$. 
Note that the probability of each formatted traces satisfying $\psi$ is also calculated and associated with the trace for quantitative purpose.
Then we can calculate: 
\[
{\D}(\OPAC{\psi}) = \sum_{p\lambda \in p\Lambda (s,\bar \odot\psi)} \prob(p\lambda). 
\]
\small{\begin{algorithm}[!h]
 \SetAlgoLined
  \KwData{$\MMM, s, \psi$}
  \KwResult{ probabilistic non-opaque traces $p\Lambda(s,\bar{\opac}\psi)$} 
 \Switch{$\psi$} {
	\textbf{case} $\X \phi$: \ \ \ $\SAT(\psi) \leftarrow \cup_{i \in \NAT}\{tr(s \to s_i) \mid \POST(s)=s_i  \land s_i \in \SAT(\phi)\}$, \\ 
	$\qquad \qquad \quad \ \SAT(\neg\psi) \leftarrow \cup_{i \in \NAT}\{tr(s \to s_i) \mid \POST(s)=s_i  \land s_i \in \SAT(\neg\phi)\}$\;
	\textbf{case} $\phi \U \phi'$: $\SAT(\psi) \leftarrow \textbf{compU}(\MMM, s, \phi, \phi')$, \\ 
	$\qquad \qquad \quad \ \SAT(\neg\psi) \leftarrow \textbf{compR}(\MMM, s, \neg\phi, \neg\phi')$\;
	\textbf{case} $\phi \R \phi'$: $\SAT(\psi) \leftarrow \textbf{compR}(\MMM, s, \phi, \phi')$, \\ 
	$\qquad \qquad \quad \ \SAT(\neg\psi) \leftarrow \textbf{compU}(\MMM, s, \neg\phi, \neg\phi')$\;
	}  
	$p\Lambda \leftarrow  \{p\lambda \mid p\lambda.tr \leftarrow \lambda \land p\lambda.pr \leftarrow \prob(\lambda) \ \text{for} \ \lambda \in \SAT(\psi)\}$\;
	$p\Lambda' \leftarrow \{p\lambda \mid p\lambda.tr \leftarrow \lambda \land p\lambda.pr \leftarrow \prob(\lambda) \ \text{for} \ \lambda  \in \SAT(\neg\psi)\}$\;
   $p\Lambda'' = \{\}$\;
  \For{each $p\lambda \in p\Lambda$}
  {
		\For {each $p\lambda' \in p\Lambda'$}
		{
	  	\If{$\obs(p\lambda.tr) \subseteq \obs(p\lambda'.tr)$}
	  	{ 
	    	$p\Lambda'' \leftarrow p\Lambda'' \cup \{p\lambda\}$; break \;
	  	}
	  }
  }
  ${p\Lambda}_{\bar{\opac}} \leftarrow p\Lambda \setminus p\Lambda''$; \tcc*[f]{non-opaque probabilistic traces}\;
  \Return{$p\Lambda_{\bar{\opac}}$}.
  \caption{$\PT_{\opac}(\MMM,s,\psi)$: finding probabilistic non-opaque traces} 
 \label{algo:p-opac}
\end{algorithm}}
\begin{theorem}[Soundness of $\P_{\MMM} (\OPAC{\psi})$ translation]
Given a model $\MMM$, starting state $s$, a probability threshold $p$, and a security property $\psi$:
\[
\MMM \models_s \P_{\bowtie p} (\OPAC{\psi}) \qquad 
\text{iff} \qquad 
{\prob}(\PT_{\opac}(\MMM,s,\psi) ) \bowtie \mathit{p}.
\]
\end{theorem}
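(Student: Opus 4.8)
The plan is to reduce the statement to two ingredients already (essentially) in hand: the semantic characterisation of $\P_{\bowtie p}\lbrack\OPAC{\psi}\rbrack$ in terms of the probability mass of the semantically observable $\psi$-paths, and the combinatorial correctness of Algorithm~\ref{algo:p-opac}. First I would unfold the left-hand side: by the definition of the probabilistic opacity operator and the characterisation theorem proved just above, $\MMM \models_s \P_{\bowtie p}\lbrack\OPAC{\psi}\rbrack$ holds iff $\D(\OPAC{\psi}) \bowtie p$ iff $\prob(\Pi)\bowtie p$, where $\Pi = \{\pi \in \RUN{\LLL,s} \mid \MMM\models_{\pi}\psi \ \land\ \pi \text{ semantically observable}\}$. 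Hence it suffices to show $\prob(\PT_{\opac}(\MMM,s,\psi)) = \prob(\Pi)$, i.e.\ that Algorithm~\ref{algo:p-opac} returns a probability-annotated family $p\Lambda_{\bar{\opac}}$ of formatted traces whose total probability mass equals the measure of $\Pi$.

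Second, I would establish a representation lemma for the formatted traces. Appealing to (the argument underlying) Theorem~\ref{theo:soundness-opacctl} and to the construction of $\textbf{compU}$/$\textbf{compR}$ and the $\X$ case, the sets $\SAT(\psi)$ and $\SAT(\neg\psi)$ computed in the switch are sound and complete: the union of the path sets denoted by the regular-expression-like traces in $\SAT(\psi)$ equals $\{\pi\in\RUN{\LLL,s}\mid\MMM\models_{\pi}\psi\}$, and symmetrically for $\SAT(\neg\psi)$ using $\neg\psi$ (the positive-normal-form dualities $\neg(\phi\U\phi')\equiv\neg\phi\R\neg\phi'$, etc., justify the swap of $\textbf{compU}$ and $\textbf{compR}$ in the $\neg\psi$ branch). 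Moreover each such trace $\lambda$ denotes a measurable set of paths of measure $\prob(\lambda)$, stored as $p\lambda.pr$, and the traces in $\SAT(\psi)$ — indexed in Algorithm~\ref{algo:compU} by the terminal $\phi'$-state $t_i$ together with the loop-free skeletons reaching it, with cycles folded into starred subexpressions — can be taken pairwise non-overlapping, so that $\prob$ is additive over the family.

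Third, I would verify that the double loop filters correctly. A formatted trace $p\lambda\in p\Lambda$ is placed in $p\Lambda''$ exactly when some $p\lambda'\in p\Lambda'$ satisfies $\obs(p\lambda.tr)\subseteq\obs(p\lambda'.tr)$; since the observation of a formatted trace is the set of observations of its underlying paths and $\obs$ depends only on the trace (the additional structure noted after the definition of observation functions), this holds iff every path $\pi$ denoted by $\lambda$ is covered by some path $\pi'$ denoted by a $\neg\psi$-trace, i.e.\ iff $\lambda$ contributes only opaque (non-observable) $\psi$-paths. Therefore the path set denoted by $p\Lambda_{\bar{\opac}} = p\Lambda\setminus p\Lambda''$ is precisely $\Pi$, and by additivity $\prob(\PT_{\opac}(\MMM,s,\psi)) = \sum_{p\lambda\in p\Lambda_{\bar{\opac}}}\prob(p\lambda) = \prob(\Pi)$. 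Chaining the three equivalences yields $\MMM\models_s\P_{\bowtie p}(\OPAC{\psi})$ iff $\prob(\PT_{\opac}(\MMM,s,\psi))\bowtie p$.

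The main obstacle I expect is the representation lemma, specifically proving that the starred loop subexpressions generated by $\textbf{compU}$ yield a genuinely measurable and non-redundant decomposition of the path space, so that the per-trace probabilities $\prob(\lambda)$ sum to the measure of the denoted set without double counting; cycles reachable along several skeletons, and nested loops, are the delicate cases, and a careful induction on the $\textbf{while}$-loop of Algorithm~\ref{algo:compU} (maintaining that $\Pi_i$ is a disjoint cover of the $\phi\U\phi'$-paths ending at $t_i$) seems necessary. A secondary subtlety is confirming that observational covering on formatted traces is equivalent, path by path, to observational covering on the induced paths, which relies on $\obs$ being a function of the trace alone together with completeness of the $\SAT(\neg\psi)$ enumeration.
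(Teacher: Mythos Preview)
Your proposal is correct and follows the same overall strategy as the paper's proof: unfold the semantics of $\P_{\bowtie p}\lbrack\OPAC{\psi}\rbrack$ and appeal to the correctness of the algorithm's construction of the formatted-trace sets. The paper's own argument is essentially a one-paragraph sketch that cites the satisfaction relation and Algorithm~\ref{algo:p-opac}, notes termination via the regular-expression-like trace representation, and records how per-trace probabilities are computed (products along the non-cyclic part, and a geometric-series factor for cycles); your version supplies considerably more of the intermediate scaffolding---the representation lemma, disjointness of the decomposition, and the path-level equivalence of observational covering---that the paper leaves implicit.
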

\begin{proof}
The proof is obtained by the satisfaction relation of $\P_{\bowtie p} (\OPAC{\psi}) $ and the construction translation of $\PT_{\opac}(\MMM,s,\psi)$ described in Algorithm~\ref{algo:p-opac}. 
The algorithm will terminate since $\SAT(\psi)$ are computed as a set of regular-expression-like formatted traces satisfying $\psi$ as Algorithm~\ref{algo:opac}. Probability of such a trace is calculated by multiplication of the probability of each transition label for non-cycle part, and multiplication of $p/(1-p)$ for a cycle with probability $p$.
$\hfill\Box$
\end{proof}

We have implemented our translation as an extension of the PRISM model checker~\cite{impl}. Example~\ref{eg:veri} presents the result generated by the prototype tool.
\begin{example}
\label{eg:veri}
Consider the following example used in~\cite{BryansKM12} presented in Fig.~\ref{fig:eg-probopac}.
Let $s_0$ be a starting state, 
$\{t_0, t_1, t_2, t_3, t_4, t_5\}$ be final states,
observation function be: $a \to a$, $b \to \epsilon$, $c \to c$, and $x \to \epsilon$.
Assume the property of interest is the system \emph{terminating at sensitive states $\{t_2, t_3, t_5\}$}.
%
PRISM also allows to directly specify properties which evaluate to a value using $\P=?[\psi]$.
The property specification is given as:
\[
\P =? [\opac ~\F~ (((s=2) \lor (s=3) \lor (s=5)) \land (t=1))],
\]
where $s=i$ denotes $t_i$ is a sensitive state, $t=1$ denotes the status of terminating.
We can automatically calculate the probabilistic opacity of the system as:
\begin{Verbatim}
   Result: 0.026.{0.01046:bcax:ca,0.0156:ca(b)*x:ca} 
           (value in the initial state)
\end{Verbatim}
where $0.026$ denotes the probability of opacity of the system, 
\ie the probability of traces satisfying $\psi$ 
but not covered by those observationally equivalent traces violating $\psi$, which include: 
$bcax$ with probability $0.0104$ and $ca(b)^*x$ with probability $0.0156$, both observed as $ca$.
\begin{center}
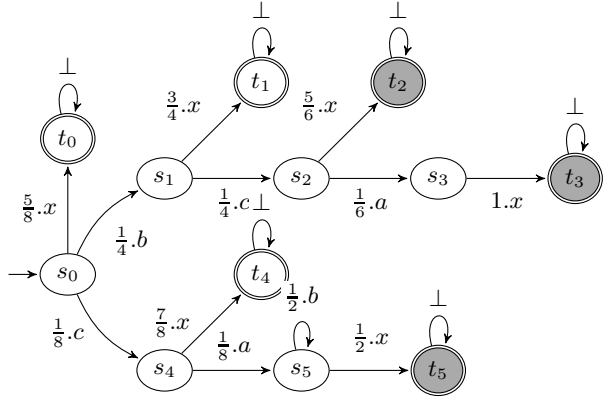
\begin{figure}[!h]
\centering
  \begin{tikzpicture}[->,>=stealth',shorten >=1pt,auto,node distance=1.8cm, scale = 1, transform shape]

  \node[initial, state] (A) {$s_0$};
  \node[state, accepting] (A1) [above of=A]  {$t_0$};
  \node[state] (B) [above right of=A] {$s_1$};
  \node[state, accepting] (B1) [above right of=B] {$t_1$};
  \node[state] (C) [right of=B] {$s_2$};
  \node[state, accepting,fill={rgb:black,1;white,2}] (C1) [above right of=C] {$t_2$};
  \node[state] (D) [right of=C] {$s_3$};
  \node[state, accepting,fill={rgb:black,1;white,2}] (D1) [right of=D] {$t_3$};
  \node[state] (E) [below right of=A] {$s_4$};
  \node[state, accepting] (E1) [above right of=E] {$t_4$};
  \node[state] (F) [right of=E] {$s_5$};
  \node[state, accepting,fill={rgb:black,1;white,2}] (F1) [right of=F] {$t_5$};

   \path[every node/.style={font=\sffamily\small,
   		fill=white,inner sep=1pt}]
     (A) edge node [left=1mm]{$\frac{5}{8}.x$} (A1)
	 edge [bend left=20] node[below right=1.5mm] {$\frac{1}{4}.b$} (B)
	 edge [bend right=20] node[left=2mm] {$\frac{1}{8}.c$} (E)
     (B) edge node [above left=1.5mm]{$\frac{3}{4}.x$} (B1)
	 edge node[below=1mm] {$\frac{1}{4}.c$} (C)
     (A1) edge [loop above] node[above=1mm] {$\bot$} (A1)
     (B1) edge [loop above] node[above=1mm] {$\bot$} (B1)
     (C1) edge [loop above] node[above=1mm] {$\bot$} (C1)
     (D1) edge [loop above] node[above=1mm] {$\bot$} (D1)
     (E1) edge [loop above] node[above=1mm] {$\bot$} (E1)
     (F1) edge [loop above] node[above=1mm] {$\bot$} (F1)
     (D) edge node[below=1.5mm] {$1.x$} (D1)
     (C) edge node[above left=1.5mm] {$\frac{5}{6}.x$} (C1)
     	 edge node [below=1mm]{$\frac{1}{6}.a$} (D)
     (E) edge node[left=2.5mm] {$\frac{7}{8}.x$} (E1)
     	 edge node [above=1mm]{$\frac{1}{8}.a$} (F)
     (F) edge [loop above] node[above=1mm] {$\frac{1}{2}.b$} (F)
     	 edge node [above=2mm]{$\frac{1}{2}.x$} (F1)
;
\end{tikzpicture}
\caption{Example: probabilistic opacity, grey node denotes a sensitive state.}
\label{fig:eg-probopac}
\end{figure}
\end{center}
\end{example}

\section{Implementation and Examples}
\label{sec:eg}

We have built a prototype tool for verification of opacity as an extension of the PRISM model checker~\cite{KNP11}. 
Models are described in an extension of the PRISM modelling language with observations and transition labels.
The new model type is denoted as ``ldtmc''. Properties are described in an extension of the PRISM's property specification language with the opacity operator.
The tool and details of all examples and case studies are available from~\cite{impl}.

\subsection{Modelling probabilistic opacity in PRISM}
Models in PRISM are described in a state-based language based on guarded commands. 
A model is constructed as a number of \emph{modules} which can interact.
Each module contains a set of finite-valued variables which define the state of the module. 
The behaviour of each module is described by a set of guarded commands in the form of: 
\begin{small}
\begin{Verbatim}
   [<action>] <guard> -> <prob> : <update> + ... + <prob> : <update>;
\end{Verbatim}
\end{small}
The \emph{guard} is a predicate over the variables of all the modules in the model.
Each \emph{update} describes a probabilistic transition 
which specifies how the variables of the module are updated if the guard is true.
The \emph{prob} attached with each update specifies the probability that the corresponding state transition takes place.
The \emph{action} label is optional which allows modules to synchronise over commands.

We have extended the existing modelling language for model type ``ldtmc'' to allow: 
(i) the definition of \emph{observation functions}: 
\begin{small}
\begin{Verbatim}
	observations
		<label> -> <observable>, ... <label> -> <observable>;
	endobservations
\end{Verbatim}
\end{small}
which defines each label and its observation 
through the keyword \texttt{observations};
(ii) and the specification of \emph{transition labels} over updates is in the form:
\begin{small}
\begin{Verbatim}
   []<guard> -> <prob>:<LABEL>:<update> + ... + <prob>:<LABEL>:<update>;
\end{Verbatim}
\end{small}

\subsection{Modelling dining cryptographers}
 Anonymity is an important concept in security.  To illustrate the
 versatility of our work, we use our logic to express anonymity of the
 dining cryptographers protocol~\cite{Chaum88}.
 Consider that three cryptographers $1$, $2$ and $3$ are sharing a meal at a restaurant.
 At the end of the meal, 
 at most one cryptographer will pay the bill,
 and they would like to check whether the bill has
 been paid or not, but the cryptographers respect each other's right
 to make an anonymous payment.
 A two-stage protocol is performed to solve the problem:
 (i) every two cryptographers establish a shared one-bit secret:
     each of them flips a coin, the outcome is only visible to
     himself and the cryptographer on his right;
 (ii) each cryptographer publicly announces whether the two outcomes agree or disagree,
      if the cryptographer is not the payer, he says the truth,
      otherwise, he states the opposite of what he sees.
 When all cryptographers have announced, they count the number of disagrees. 
 If that number is odd, then one of them has paid,
but no other cryptographer is able to deduce who is the payer. 

Without loss of generality, let us assume cryptographer 3 is the observer who tries to know which of the other two paid the bill
if the bill has been paid.
The observer does not know the initial state of 
the cryptographer $i=1,2$ being the payer ($p_i$), 
he knows the outcome of the flipping coins of himself and of the cryptographer-1: \emph{head} ($h_i$) or \emph{tail} ($t_i$) where $i=1,3$ but does not know that of the cryptographer-2,
 he knows the procedure of the protocol and he can hear 
 what cryptographer $i$ says: \emph{agree} ($a_i$) or \emph{disagree} ($d_i$),
 and therefore he knows the outcome of the disagreement counting is 
 \emph{even} ($e$) or \emph{odd} ($o$).
 In other words, labels $p_i$, $h_2$, $t_2$ are invisible to him,
 while labels $h_1$, $t_1$, $h_3$, $t_3$, $d_i$, $a_i$, $e$ and $o$ 
 are visible to him.
 Therefore the set of observables includes:
 $\Theta = \{d_i, a_i, e, o ~|~ 1 \le i \le 3\} \cup \{h_i, t_i ~|~ i=1,3\}$,
 and the observation function on the transition labels of the model is specified as:
 $p_1,p_2,t_2,h_2 \to \epsilon$;
 $a_i \to a_i$, $d_i \to d_i$; $h_1 \to h_1$;
 $t_1 \to t_1$, $h_3 \to h_3$; $t_3 \to t_3$;
 $e \to e$; $o \to o$.
Our tool can automatically check the property ``cryptographer $1$ is the payer'' ($\psi = \X (\mathit{payer}=c_1)$) is \emph{opaque} if the bill has been paid.
The property specification is given as:
$\P=? [\opac~ \X~ (\mathit{payer}=c_1)]$.
Our  tool answers  the question  ``\emph{$c_1$ is the payer} is $\psi$-opaque'' as follows:
\begin{Verbatim}
   Result: 0.0.{} (value in the initial state)
\end{Verbatim}
The result shows that there are no observable traces found.

\subsection{Modelling a location privacy example}
Consider a simple example of location privacy releasing by credit card
records presented in Fig.~\ref{fig:case-location} which describes the card holder's activities.  Assume the
adversary can observe the credit card records to track partial
location information and the observations are given as:
$\mathsf{station} \rightarrow s, \mathsf{work} \rightarrow \epsilon, \mathsf{travel} \rightarrow
\epsilon, \mathsf{office} \rightarrow \epsilon, \mathsf{coffeshop} \rightarrow c, \mathsf{bankA}
\rightarrow b, \mathsf{bankB} \rightarrow b, \mathsf{airport} \rightarrow a, \mathsf{home}
\rightarrow \epsilon, \mathsf{L1} \rightarrow \epsilon, \mathsf{L2} \rightarrow
\epsilon, \mathsf{L3} \rightarrow \epsilon$.
Suppose that the states leading to the final location $\mathsf{L1}$, $\mathsf{L2}$ and $\mathsf{L3}$ are sensitive. 
Then we have the property specification as:
$\P=? [\opac~ \F~ dest]$,
where \emph{dest} denotes states $q_9$, $q_{10}$ and $q_{11}$, led by locations $\mathsf{L1}$, $\mathsf{L2}$ and $\mathsf{L3}$ respectively.
The result generated by the tool is as follows:
\begin{Verbatim}
   Result: 0.33333333333.
           {0.16666666666666666:stationtravelbankBairportL1:sba,
           0.16666666666666666:stationtravelbankBairportL2:sba} 
           (value in the initial state)
\end{Verbatim}
The result meets our intuition, that the traces leading to $\mathsf{L1}$ and $\mathsf{L2}$
with observation $sba$ and $sba$ are not covered by traces leading to insensitive final location states.
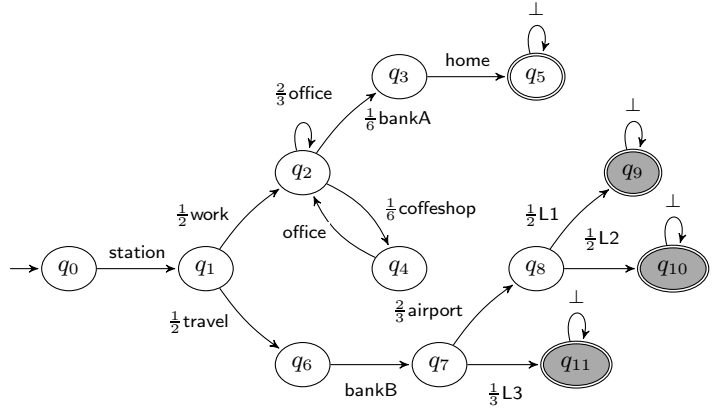
\begin{figure}[!h]
\centering 
  \begin{tikzpicture}[->,>=stealth',shorten >=1pt,auto,node distance=1.8cm, scale = 1, transform shape]
  \node[initial,state] (A) {$q_0$};
  \node[state] (B) [right of=A] {$q_1$};
  \node[state] (C) [above right of=B] {$q_2$};
  \node[state] (D) [above right of=C] {$q_3$};
  \node[state] (E) [below right of=C] {$q_4$};
  \node[state, accepting] (F) [right of=D] {$q_5$};
  \node[state] (G) [below right of=B] {$q_6$};
  \node[state] (I) [right of=G] {$q_7$};
  \node[state] (J) [above right of=I] {$q_8$};
  \node[state, accepting,fill={rgb:black,1;white,2}] (K) [above right of=J] {$q_9$};
  \node[state, accepting,fill={rgb:black,1;white,2}] (L) [right of=J] {$q_{10}$};
  \node[state, accepting,fill={rgb:black,1;white,2}] (M) [right of=I] {$q_{11}$};

   \path[every node/.style={font=\sffamily\scriptsize,
   		fill=white,inner sep=1pt}]
     (A) edge node [above=1mm]{station} (B)
     (B) edge [bend left=10] node[left=2mm] {$\frac{1}{2}$work} (C)
	 edge [bend right=10] node[left=2mm] {$\frac{1}{2}$travel} (G)
     (C) edge [loop above] node[above=2mm] {$\frac{2}{3}$office} (C)
     	 edge [bend left=10] node [right=2mm]{$\frac{1}{6}$bankA} (D)
     	 edge [bend left=20] node [right=2mm]{$\frac{1}{6}$coffeshop} (E)
     (D) edge node[above=1mm] {home} (F)
     (E) edge [bend left=20] node [left=1mm]{office} (C)
     (F) edge [loop above] node[above=1mm] {$\bot$} (F)
     (G) edge node[below=2mm] {bankB} (I)
     (I) edge [bend left=10] node[left=2mm] {$\frac{2}{3}$airport} (J)
	 edge node[below=2mm] {$\frac{1}{3}$L3} (M)
     (J) edge [bend left=10] node[left=2mm] {$\frac{1}{2}$L1} (K)
	 edge node[above=2mm] {$\frac{1}{2}$L2} (L)
     (K) edge [loop above] node[above=1mm] {$\bot$} (K)
     (L) edge [loop above] node[above=1mm] {$\bot$} (L)
     (M) edge [loop above] node[above=1mm] {$\bot$} (M)
;
\end{tikzpicture}
\caption{Modelling a location privacy example}
\label{fig:case-location}
\end{figure}


\section{Entropy of the Opacity Formula}
\label{sec:entropy}
Probabilistic specification $\P(\OPAC{\psi})$ calculates the probability of non-$\psi$-opaque behaviours of a model $\MMM$.
In this section, we provide some discussions on an alternative measurement of the opacity formula based on the notion of \emph{entropy}:
\[\HHH(\OPAC{\psi}).\]
We adapt the definition of the entropy of a language (of finite words)~\cite{ChomskyM58} to calculate this.

\begin{definition}[$\HHH \lbrack \OPAC{\psi} \rbrack$]
\label{def:entropy}
Given a model $\MMM = (\LLL, \eta, \obs)$. 
Let $\psi$ be a path formula,
the entropy of the opacity formula $\OPAC{\psi}$ is defined as:
\[
\HHH ( \OPAC{\psi}  ) 
= \limsup_{n \to +\infty} \frac{\log_2 (1+|~\lbrack \psi \setminus \obs^{-1}(\obs(\bar\psi))\rbrack_n~|)}{n},
\]
where:
\begin{eqnarray*}
|~\lbrack \psi \setminus \obs^{-1}(\obs(\bar\psi))\rbrack_n~|
&=&
|~\{
\pi \in \RUN{\LLL,s} \mid \MMM \models_{\pi} \psi\\ 
&& 
\quad \land ~ 
|~\ERASE{\pi}~|=n\\ 
&& 
\quad \land ~ 
\pi \text{ is transparent}
\}~|.
\end{eqnarray*}
\end{definition}
Intuitively, the entropy of $\OPAC{\psi}$ can be understood as the amount of information (in bits per symbol) in typical words of (the language of) ${\psi}$-transparent.  

It is easy to slightly change Algorithm~\ref{algo:opac} to calculate the size of transparent traces. We can then take the prefix of those traces with length of $n$ and compute the entropy of $\psi$-opaque property using Definition~\ref{def:entropy}.

\begin{example}
Consider again the models presented in Example~\ref{fig:eg-opacctl}. It is easy to see that for any $n \in \NAT$, 
$|~ \lbrack \OPAC{F s_i} \rbrack_n ~| = 0$, so:
\[
\HHH \lbrack \OPAC{F s_i} \rbrack
= \limsup_{n \to +\infty} \frac{\log_2 (1+0)}{n}
= 0,
\]
where $i=3$ for model (a) and $i=2$ for model (b).
The result shows the degree of transparency of traces satisfying $\psi = F s_i$ is $0$ in the notion of entropy, and implies the model is $\lbrack F s_i \rbrack$-opaque.
\end{example}

\begin{example}
Consider again the models presented in Fig.~\ref{fig:eg-opacpctl}. It is easy to see that for any $n \in \NAT$, 
$|~ \lbrack \OPAC{F s_3} \rbrack_n| = n$, so:
\[
\HHH \lbrack \OPAC{F s_3} \rbrack
= \limsup_{n \to +\infty} \frac{\log_2 (1+n)}{n}
= 0.
\]
The result shows the degree of transparency of traces satisfying $\psi = F s_3$ is $0$ in the notion of entropy, and implies the model is $\lbrack F s_3 \rbrack$-opaque.
Note that, 
the entropy-based measurement is not as precise as the probabilistic-based measurement.
But it somehow meets our intuition:
when $n \to \infty$,  
most of the traces satisfying $\lbrack F s_3 \rbrack$ are covered by traces violating $\lbrack F s_3 \rbrack$ (only $acba\bot^*$ is not covered) from the observer's view.
\end{example}

\cite{AsarinBDDM14a} formulated the basic entropy-based properties in model checking context. We adapt their discussions here to our scenario for opacity properties, to illustrate some intuition of entropy-based measurement.
Consider a model $\MMM$, and an opacity formula $\OPAC{\psi}$. Let $\Pi$ and $\Pi(\OPAC{\psi})$ denote all the behaviours of the model and all the (in)finite paths satisfying $\OPAC{\psi}$, intuitively: 
\begin{itemize}
\item $\HHH(\Pi)$ measures the quantity of all the behaviours of the system, 
\item $\HHH(\Pi \cap \Pi(\OPAC{\psi}))$ measures the quantity of the $\psi$-opaque behaviours of the system, \item $\HHH(\Pi)-\HHH(\Pi \cap \Pi(\OPAC{\psi}))$ quantifies how difficult to steer the system to be $\psi$-opaque, 
\item and $\HHH(\Pi \setminus  \Pi(\OPAC{\psi})) = \HHH(\OPAC{\psi})$ measures the quantity of non-$\psi$-opaque (transparent) behaviours of the system.
\end{itemize}

In general, for any language accepted by a given finite well-structured model $\MMM$, its entropy can be effectively computed using linear algebra~\cite{ChomskyM58}. 
Let $A(\MMM)$ denote the extended adjacency  matrix of $\MMM$:
\[
A(\MMM) = | \{a \in \Sigma \mid s \TRANS{a} t \in \FS\}|.
\]
\begin{theorem}\cite{ChomskyM58}
\label{thm:entropy}
For any finite deterministic trimmed model $\MMM$, 
the entropy of the paths accepted by $\MMM$ can be calculated as:
\[
\HHH(\RUN{\MMM}) = \log_2 \rho(A(\MMM)),
\]
where $\rho(A)$ is the spectral radius the matrix $A$, \ie maximal modulus of its eigenvalues.
\end{theorem}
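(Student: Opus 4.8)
The statement is a classical result about regular languages, so the plan is to recover the standard linear-algebraic argument in the vocabulary of the paper. First I would observe that, because $\MMM$ is deterministic, trimmed, and finite, the number of distinct paths of combinatorial length $n$ starting from a given state is exactly the number of walks of length $n$ in the underlying directed multigraph whose adjacency matrix is $A = A(\MMM)$; counting multi-edges by the multiplicity $|\{a \in \Sigma \mid s \TRANS{a} t\}|$ is precisely why $A$ is defined as the \emph{extended} adjacency matrix. Hence, writing $\mathbf{1}$ for the all-ones vector and $v_0$ for the indicator of the start state, the count $N_n := |\RUN{\MMM}|_n|$ of length-$n$ paths satisfies $N_n = \mathbf{1}^{\mathsf T} A^n v_0$ (summing over all admissible endpoints; if one wants only paths ending in $F$, replace $\mathbf{1}$ by the indicator of $F$, which does not affect the growth rate since $\MMM$ is trimmed).

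The second step is to pin down the asymptotics of $N_n$. By the Perron--Frobenius theory applied to the nonnegative matrix $A$, the growth rate $\limsup_n N_n^{1/n}$ equals the spectral radius $\rho(A)$; more precisely, $\mathbf{1}^{\mathsf T} A^n v_0$ is a sum of terms of the form $p(n)\,\lambda^n$ over eigenvalues $\lambda$ of $A$ (with $p$ polynomial of degree below the size of the corresponding Jordan block), and the dominant contribution has $|\lambda| = \rho(A)$. Trimmedness guarantees every state is both reachable from $s$ and co-reachable to an accepting state, so $v_0$ and $\mathbf{1}$ genuinely ``see'' the dominant eigenspace and $N_n$ does not collapse to a subexponential quantity unless $\rho(A) \le 1$; the $+1$ inside the logarithm in Definition~\ref{def:entropy} is exactly what keeps the formula well-behaved in that degenerate case. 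Plugging $N_n = \Theta(p(n)\rho(A)^n)$ into
\[
\HHH(\RUN{\MMM}) = \limsup_{n\to+\infty}\frac{\log_2\!\bigl(1+N_n\bigr)}{n}
\]
and using $\log_2 p(n) = o(n)$ yields $\HHH(\RUN{\MMM}) = \log_2 \rho(A)$, as claimed.

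The main obstacle is not any single deep fact but getting the bookkeeping exactly right: one must be careful that ``length'' in Definition~\ref{def:entropy} refers to the length of the erased trace $|\ERASE{\pi}|$ (number of labels), which for the path-as-alternating-sequence encoding of this paper corresponds to $n$ transition steps, so that $A^n$ is indeed the right power; and one must handle the well-structured/$\bot$ convention, since the termination self-loops $\bot$ on final states contribute a fixed self-loop to $A$ and could artificially inflate $\rho(A)$ to at least $1$. I would address this by noting that trimmedness plus the well-structured assumption means those $\bot$-loops sit on sink components, so either they dominate (giving entropy $0$, matching $\log_2 1$) or the genuine cyclic structure elsewhere dominates; in both cases $\log_2\rho(A)$ is the correct answer. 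The rest is the routine Perron--Frobenius estimate, which I would cite from~\cite{ChomskyM58} rather than reprove.
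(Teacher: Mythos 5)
The paper does not prove this theorem at all --- it is stated purely as a citation of the classical Chomsky--Miller result and then used as a black box --- so there is no internal proof to compare against. Your reconstruction is the standard argument behind that citation (counting length-$n$ paths of the deterministic trimmed model $\MMM$ via $\mathbf{1}^{\mathsf T} A^n v_0$ with $A = A(\MMM)$ the extended adjacency matrix, then extracting the growth rate $\rho(A)$ by Perron--Frobenius, with trimmedness guaranteeing the start and accepting vectors meet the dominant eigenspace), and it is correct; your care about measuring length on $\ERASE{\pi}$ and about the $\bot$ self-loops forcing $\rho(A)\ge 1$ is precisely the bookkeeping needed to make the cited result apply in this paper's well-structured setting.
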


If we can find a finite deterministic model accepting the specified paths satisfying the property (this is out range of this paper), we can then calculate the entropy of the property as the logarithm of a spectral radius using Theorem~\ref{thm:entropy}. 

\begin{example}
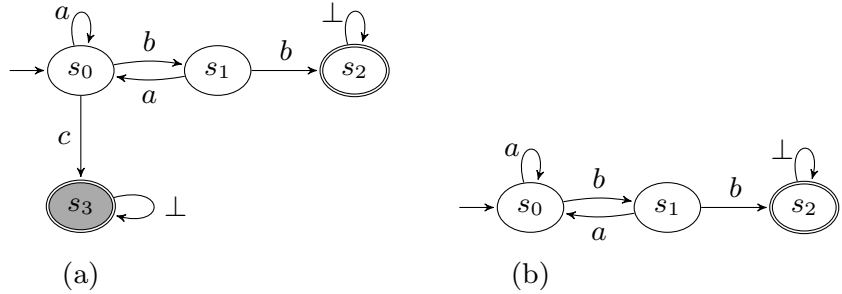
\begin{figure}[!h]
\centering
 \begin{tikzpicture}[->,>=stealth',shorten >=1pt,auto,node distance=1.5cm, scale = 1.2, transform shape]
  \node[initial,state] (0) {$s_0$};
  \node[state] [right of=0] (1) {$s_1$};  
  \node[state, accepting] [right of=1](2) {$s_2$};
  \node[state, accepting,fill={rgb:black,1;white,2}] [below of=0] (3) {$s_3$};
  \node[align = center, below =2mm of 3] {(a)};
  \path (0) edge [loop above] node[left] {$a$} ();
  \path (0) edge  node[left] {$c$} (3);
  \path (0) edge [bend left=10] node[above] {$b$} (1);
  \path (1) edge [bend left=10] node[below] {$a$} (0);
  \path (1) edge  node[above] {$b$} (2);
  \path (2) edge [loop above] node[left] {$\bot$} ();
  \path (3) edge [loop right] node[right] {$\bot$} ();
\end{tikzpicture}
~~~~
 \begin{tikzpicture}[->,>=stealth',shorten >=1pt,auto,node distance=1.5cm, scale = 1.2, transform shape]
  \node[initial,state] (0) {$s_0$};
  \node[state] [right of=0] (1) {$s_1$};
  \node[state, accepting] [right of=1](2) {$s_2$};
  \node[align = center, below =2mm of 0] {(b)};
  \path (0) edge [loop above] node[left] {$a$} ();
  \path (0) edge [bend left=10] node[above] {$b$} (1);
  \path (1) edge [bend left=10] node[below] {$a$} (0);
  \path (1) edge  node[above] {$b$} (2);
  \path (2) edge [loop above] node[left] {$\bot$} ();
\end{tikzpicture}
\caption{Example: Entropy of $\OPAC{F~ s_3}$ for model (a). Grey node denotes a sensitive state. (b) is the model accepting all ${F~ s_3}$-transparent paths.}
\label{fig:eg-entropy}
\end{figure}
Consider a model presented in Fig.~\ref{fig:eg-entropy} (a) named $\MMM_a$. Let $s_3$ be a sensitive state, and $s_0$ be the starting state, and the observation function be: $a \to \epsilon$, $b \to b$, $c \to c$. Consider the security property \emph{eventually reaching $s_3$}, \ie $\psi = \F s_3$ in our logic. 
It is easy to see that the model presented in Fig.~\ref{fig:eg-entropy} (b), say $\MMM_b$, accepts all ${F~ s_3}$-transparent paths.
We can then calculate $\HHH(\OPAC{F~ s_3})$ by Theorem~\ref{thm:entropy} as:
\[
\HHH(\OPAC{F~ s_3}) = \log_2 \rho(A(\MMM_b) = \log_2 \frac{1+\sqrt{5}}{2}.
\]
\end{example}

\section{Related Work}
\label{sec:related-work}
We present related literature from the perspective
of specification and verification of information security policies.
This work relates to the specification of information security policies and (quantitative) verification of opacity properties.

\textit{Security policies.} 
There are a number of information security policies for confidentiality:
non-deducibility~\cite{Sutherland86} is designed to keep attacker observable events 
consistent with possible variations of secret inputs,
but this policy is not able to protect secret outputs and to address covert channels;
non-interference~\cite{GoguenMeseguer82} is one of the most popular flow policies but it is too strong for practical applications;
quantified non-interference~\cite{ClarkHM01} is then introduced to relax the absolute non-interference policy by computing the amount of the interference;
declassification policies~\cite{SabelfeldS05} control the release of information;
(quantified) opacity~\cite{BryansKMR08,BerardMS10,BryansKM12,BerardCS15b} is considered as a
more general property where the sensitive information can be contained in the input, output, and each transition step,
which can lead to the definition of initial, final, and language-based opacity.
Opacity is a promising approach for describing and unifying security properties. 
Recently, quantified opacity~\cite{BerardMS10,BryansKM12,BerardCS15b} has been studied
in terms of probability and information entropy. 

This work focuses on general policies using opacity. 
Opacity has reasonable potential being a good choice for specifying flow security properties for modern communication systems due to the feature of partial observability and uncertainty of the environment.

\textit{Verification of opacity.} 
Opacity was first introduced in the context of cryptographic protocols in~\cite{Boisseau03,Mazare04}.
Later on, Bryans et al investigated opacity properties in systems
modelled as Petri nets~\cite{BryansKR04} and labelled transition systems~\cite{BryansKMR08},
where the secret was specified as predicates over system runs.
Generally speaking, the opacity properties can be classified into two families:
\emph{language-based opacity}~\cite{BadouelBBCD07,Dubreil09,Ben-KalefaL11} 
where the secret predicate is regarding to a subset of system runs;
and \emph{state-based opacity}
~\cite{BryansKMR08,SabooriH07,SabooriH11,WuL13,FalconeM13} 
where the secret predicate is referring to a subset of states.
Intuitively, the system is language-based opaque if for any word $w$
in the secret language $L_S$, there is at least one other word $w'$
in the non-secret language $L_{NS}$ equivalent to $w$ based on the adversary's observations;
the system is considered as state-based opaque if the adversary
is not able to induce whether the initial state (initial-state opacity), 
current state (current-state opacity~\cite{BryansKMR08,SabooriH07}), 
the state a few steps ago (K-step opacity~\cite{SabooriH07}), or 
the initial-final state pair (Initial-and-Finial state opacity~\cite{WuL13}) 
is a secret state or not.
%
Dubreil~\cite{Dubreil09} studied opacity verification of infinite-state systems using \textit{approximate} models.
Kobayashi and Hiraishi~\cite{KobayashiH13} investigated the approach of
verification of opacity for infinite-state discrete event systems modelled by pushdown automaton.
They showed that opacity of pushdown systems is undecidable.
The relationship among variant notions of opacity has been studied~\cite{SabooriH11,CassezDM12,WuL13} and transformation mappings between them described -- enabling efficient use of existing verification approaches.
%

This paper focuses on an easy-expressing logic \opacctl and \opacpctl for \emph{specifying} opacity, and applies probabilistic model checking techniques for automatically \emph{verifying} opacity properties with guarantees. 
Automated verification approach is built on solid foundations and provides the rigorous guarantees needed to give confidence and identify subtle flaws in a security system. 
Quantitative aspects enable designers to effectively weigh and arbitrate between concerns such as security and performance. Quantitative verification therefore turns to be a good fit for the analysis of security property. 
We build the prototype tool as an extension of PRISM~\cite{KNP11} for the quantitative verification perspective,
and will consider to integrate opacity enforcement mechanism into our framework as a future work.

\textit{Logics for security properties.}
Linear temporal logic formulas cannot express properties of sets of execution paths, i.e. hyperproperties are required for specifying information flow policies,
such as non-interference properties.
Computational tree logic formulas cannot express observational determinism
even if path quantifiers are defined, 
hence the need to develop specialised logics
that characterise information flow properties for security policies.
Dimitrova \cite{DimitrovaFKRS12} proposed SecLTL, 
an extension of LTL with a hide operator, 
for specifying path-based integration of information flow policies in reactive systems.
The hide operator specifies requirements such that 
the observable behaviour of a system needs to be independent of the choice of a \emph{secret}.
Clarkson et. al \cite{ClarksonFKMRS14} proposed HyperLTL and HyperCTL$^*$ 
as an extension of propositional linear-time temporal logic (LTL) 
and branching-time temporal logic (CTL$^*$) respectively
for the purpose of specifying hyperproperties.
HyperLTL and HyperCTL$^{*}$ specified information flow policies 
by explicit quantification over multiple traces.
Epistemic logic~\cite{GarySyverson92} is a different approach
to reason about information flow properties from perspective of \emph{knowledge},
rather than \emph{secrets} as other logics for information flow properties.
Specifically, \emph{knowledge operators} are introduced to temporal logics 
to express knowledge-based information flow properties
for imperative programs~\cite{BalliuDG11}.

Comparing with the above works, 
we focus on elementarily expressing and verifying observability properties for security systems.
Logic for Hyperproperties can be more expressive than our logic, 
but we aim to propose easy-expressing specification of opacity and observability property for information security concerns in particular.
The logic \opacctl proposed in this paper 
can be used to specifying the opacity property in a very straightforward way for security systems, 
and the property required to be secret can be defined flexibly regarding users' requirements.
For instance, the user can require that a particular state such as initial, next or final of the system should be kept secret.
In addition, our logic \opacpctl can specify opacity security properties in a quantitative way, allowing us to reason about the degree of satisfaction or violation of the security property of interest. 
Such a degree is measured based on both \textit{probability} and \textit{entropy} of observable behaviours of the model, which is novel and promising.

\textit{Quantitative security properties}
Opacity and related concepts were first studied
and related to information flow properties in a \emph{qualitative}
context in~\cite{BryansKMR08}.
In the probabilistic context, opacity has been studied in
\cite{LakhnechM05,BerardMS10,BryansKM12}.
\cite{LakhnechM05} studied the notion of opacity in the probabilistic
computational world. There opacity was based on the probabilities of
observer's pre-beliefs on the truth of the predicate.
The work in \cite{BerardMS10} presents a quantitative information
leakage analysis in terms of probabilistic opacity.
A number of quantitative opacity notions are introduced in~\cite{BryansKM12}
which can be applied in information flow security analysis.

In this paper, the measurement of opacity has been studied in two perspectives. 
We measure the probability of observable behaviours to which extent the model satisfies a sensitive property, and apply probabilistic model checking technique to automate the approach. In many situations probabilistic verification is highly relevant, but there is an important limitation: for many interesting properties, the probability is either 0 or 1 (too precise) and thus no quantitative sense analysis~\cite{AsarinBDDM14a}. 
We therefore also study the entropy of observable behaviours regarding to a sensitive property of interest. This allows us to measure the amount of information in bits per symbol in typical behaviours.

\section{Conclusions}
\label{sec:conc}

We have proposed a novel, probabilistic logic for simply expressing the opacity of labelled transition system properties and demonstrated how opacity in this context can be checked using an extension of the PRISM model checker. 
We also provide a discussion towards an entropy-based measurement of the opacity formulae. 
Given the flexibility of opacity as a security property there are many possible directions for future research. Promising directions include applying our technique to location privacy protocols and generalising the opacity framework to games and robotic systems modelled as partially observable transition systems in order to provide better decision procedures. 
We also propose to develop approaches towards an entropy-based measurement of the opacity formulae.

\bibliographystyle{splncs03}
\bibliography{BIB-opacLTL}

\section*{Appendix}
The Dining Cryptographers protocol can be considered as a network of 
communicating PLTS shown in Fig.~\ref{fig:dc}. 
Fig.~\ref{fig:dc} (a) models the procedure of flipping coins.
There are $3$ states:
 state $0$ is the initial state,
 state $1$ indicates the outcome of one of the cryptographer 
 (let us say the $3^{rd}$ cryptographer) is ``head'' ($h_3$)
 whatever his left/right neighbour get either head ($h$) or tail ($t$),
 and state $2$ indicates the outcome of the $3^{rd}$ cryptographer is ``tail'' ($t_3$)
 whatever his left/right neighbour get either head ($h$) or tail ($t$).
Fig.~\ref{fig:dc} (b) models the behaviours of 
 the $i^{th}$ cryptographer ($i=1,2,3$). 
 There are $5$ states: state $0$ is the initial state 
 and specifies whether $i$ is the payer $i=p_i$ or not $i \ne p_i$,
 state $1$ indicates the cryptographer see two same sides of the coin 
 (either two head $hh_i$ or two tail $tt_i$) regarding to himself and 
 his left-hand side neighbour's toss-up,
 state $2$ indicates the cryptographer see two different outcome of the toss-up,
 state $3$ indicates the announced outcome of the cryptographer is ``disagreed''
 while state $4$ indicates the announced outcome is ``agreed''.
Fig.~\ref{fig:dc} (c) models the procedure of outcome counting (number of difference).
 There are $7$ states:
 state $0$ is the initial state,
 state $1$ (c.f. $2$) indicates the announced outcome of 
 cryptographer $1$ is ``agree'' (c.f. ``disagree''),
 state $3$ (c.f. $4$) indicates the announced outcome of 
 both cryptographer $1$ and $2$ are ``same'' (c.f. ``different''),
 state $5$ (c.f. $6$) indicates the number of announced outcome of 
 ``disagreement'' is even (c.f. ``odd'').
 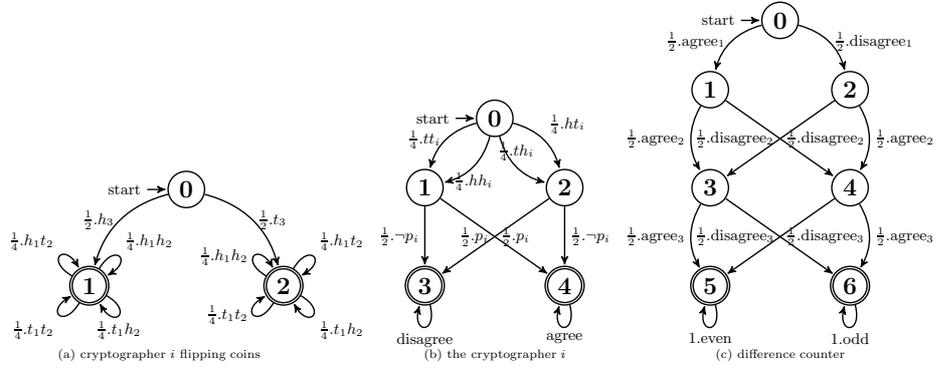
\begin{figure}[h!]
 \begin{center}
 \scalebox{0.65}{
  \begin{tikzpicture}[->,>=stealth',shorten >=1pt,auto,node distance=2cm,
                 thick,main node/.style={circle,draw,font=\Large\bfseries}]
 
   \node[main node, initial] (0) {0};
   \node[main node, accepting] (1) [below left=2cm of 0] {1};
   \node[main node, accepting] (2) [below right=2cm of 0] {2};
   \node[below right of=1] {{\scriptsize(a) cryptographer $i$ flipping coins}};
   
   \path
     (0) edge [bend right] node [left] {$\frac{1}{2}.h_3$} (1)
         edge [bend left] node [right] {$\frac{1}{2}.t_3$} (2)
     (1) edge [loop above right] node {$\frac{1}{4}.h_1h_2$} (1)
 	edge [loop above left] node{$\frac{1}{4}.h_1t_2$} (1)
 	edge [loop below right] node [below]  {$\frac{1}{4}.t_1h_2$} (1)
 	edge [loop below left] node {$\frac{1}{4}.t_1t_2$} (1)
     (2) edge [loop above right] node {$\frac{1}{4}.h_1t_2$} (2)
 	edge [loop above left] node[left] {$\frac{1}{4}.h_1h_2$} (2)
 	edge [loop below right] node {$\frac{1}{4}.t_1h_2$} (2)
 	edge [loop below left] node[left] {$\frac{1}{4}.t_1t_2$} (2);
 \end{tikzpicture}
 \begin{tikzpicture}[->,>=stealth',shorten >=1pt,auto,node distance=2cm,
                 thick,main node/.style={circle,draw,font=\Large\bfseries}]
 
   \node[main node, initial] (0) {0};
   \node[main node] (1) [below left of=0] {1};
   \node[main node] (2) [below right of=0] {2};
   \node[main node, accepting] (3) [below of=1] {3};
   \node[main node, accepting] (4) [below of=2] {4};
   \node[below right of=3] {{\scriptsize (b) the cryptographer $i$}}; 
 
   \path
     (0) edge [bend left] node [below] {$\frac{1}{4}.hh_i$} (1)
         edge [bend right] node [left] {$\frac{1}{4}.tt_i$} (1)
 	edge [bend left] node {$\frac{1}{4}.ht_i$} (2)
         edge [bend right] node[above] {$\frac{1}{4}.th_i$} (2)
     (1) edge node [left]{$\frac{1}{2}.\neg p_i$} (3)
 	edge node [left]{$\frac{1}{2}.p_i$} (4)
     (2) edge node [right]{$\frac{1}{2}.p_i$} (3)
 	edge node [right]{$\frac{1}{2}.\neg p_i$} (4)
     (3) edge [loop below] node {disagree} (3)
     (4) edge [loop below] node {agree} (4);
 \end{tikzpicture}
 \begin{tikzpicture}[->,>=stealth',shorten >=1pt,auto,node distance=2cm,
                 thick,main node/.style={circle,draw,font=\Large\bfseries}]
 
   \node[main node, initial] (0) {0};
   \node[main node] (1) [below left of=0] {1};
   \node[main node] (2) [below right of=0] {2};
   \node[main node] (3) [below of=1] {3};
   \node[main node] (4) [below of=2] {4};
   \node[main node, accepting] (5) [below of=3] {5};
   \node[main node, accepting] (6) [below of=4] {6};
   \node[below right of=5] {{\scriptsize (c) difference counter}}; 
 
   \path
     (0) edge [bend right] node [left] {$\frac{1}{2}.$agree$_1$} (1)
         edge [bend left] node [right] {$\frac{1}{2}.$disagree$_1$} (2)
     (1) edge [bend right] node [left] {$\frac{1}{2}.$agree$_2$} (3)
         edge node [right] {$\frac{1}{2}.$disagree$_2$} (4)
     (2) edge node [left] {$\frac{1}{2}.$disagree$_2$} (3)
         edge [bend left] node [right] {$\frac{1}{2}.$agree$_2$} (4)
     (3) edge [bend right] node [left] {$\frac{1}{2}.$agree$_3$} (5)
         edge node [right] {$\frac{1}{2}.$disagree$_3$} (6)
     (4) edge node [left] {$\frac{1}{2}.$disagree$_3$} (5)
         edge [bend left] node [right] {$\frac{1}{2}.$agree$_3$} (6)
     (5) edge [loop below] node {1.even} (5)
     (6) edge [loop below] node {1.odd} (6);
 \end{tikzpicture}
 }
 \caption{The model of Dining Cryptographer.}
 \label{fig:dc}
 \end{center}
 \end{figure}
 %

\end{document}